\newtheorem{theorem}{Theorem}[section]
\newtheorem{proposition}[theorem]{Proposition}
\newtheorem{corollary}[theorem]{Corollary}
\title{\textsc{Group-weighted conformal prediction}}
\author{
  \textbf{Aabesh Bhattacharyya, Rina Foygel Barber} \\
  Department of Statistics,\\
  University of Chicago,\\
  Chicago,IL\\
  \texttt{\{aabesh,rina\}@uchicago.edu} \\
}
\date{}
\begin{document}
\allowdisplaybreaks
\maketitle

\begin{abstract}
Conformal prediction (CP) is a method for constructing a prediction interval around the output of a fitted model, whose validity does not rely on the model being correct---the CP interval offers a coverage guarantee that is distribution-free, but relies on the training data being drawn from the same distribution as the test data. A recent variant, weighted conformal prediction (WCP), reweights the method to allow for covariate shift between the training and test distributions. However, WCP requires knowledge of the nature of the covariate shift---specifically, the likelihood ratio between the test and training covariate distributions. In practice, since this likelihood ratio is estimated rather than known exactly, the coverage guarantee may degrade due to the estimation error. 
In this paper, we consider a special scenario where observations belong to a finite number of groups, and these groups determine the covariate shift between the training and test distributions---for instance, this may arise if the training set is collected via stratified sampling. Our results demonstrate that in this special case, the predictive coverage guarantees of WCP can be drastically improved beyond the bounds given by existing estimation error bounds.
\end{abstract}

\section{Introduction}\label{sec:intro}
The conformal prediction (CP) framework offers a method for constructing prediction intervals, for the problem of predicting a response variable $Y\in\mathcal{Y}$ 
based on features $X\in\mathcal{X}$, in a distribution-free way (see   \citep{vovk2005algorithmic} for background). 
In its standard form, conformal prediction deals with the case of i.i.d.\ data (or, more generally, exchangeable data)---namely, the training data used
for constructing the prediction intervals, and the test data on which the prediction intervals are deployed, are assumed to be drawn i.i.d.\ from
the \emph{same} distribution. 

To relax this assumption, \citep{tibshirani2019conformal} extend conformal prediction
to the setting of covariate shift, where the test data distribution $Q$ may differ from the training data distribution $P$ in terms of the marginal
distribution of the covariates---that is, $Q_X$ may differ from $P_X$ (but the dependence of the response $Y$ given features $X$ is assumed
to be the same across training and test data, i.e., the conditional distributions $P_{Y|X}$ and $Q_{Y|X}$ are assumed to be equal).
However, the resulting method, weighted conformal prediction (WCP), requires knowledge of the covariate shift (specifically, the ratio between the marginal
distributions $P_X$ and $Q_X$, used to \emph{reweight} the training data to better resemble the test data distribution). For example, if $P_X$ and $Q_X$ have densities $f$ and $g$, respectively,
implementing WCP requires access to a weight function $w(x) \propto g(x)/f(x)$. However, in a real data setting, it is not realistic 
to assume that $w(x)$ is known exactly. In practice the weight function would instead be approximated by some estimate $\widehat{w}(x)$, e.g.,
by estimating these unknown densities via fitting models to the training and test data. The error in this estimation process---i.e., the error of
$\widehat{w}$ in estimating $w$---can then
lead to a failure of coverage in the resulting prediction intervals. 

In this paper, we consider a special case of the covariate shift problem: we are interested in the setting where the (unknown) 
weight function $w(x)$ depends only on the subpopulation to which this data point belongs. Specifically, suppose that the feature vector $X$
can be written as $X = (X^0,X^1)$, where $X^0\in[K] = \{1,\dots,K\}$ indicates the group or subpopulation to which the data point belongs,
and $X^1$ captures any remaining features for this data point. 
This type of structure can arise in settings where we collect data from different groups, and can include sampling strategies such
as stratified sampling. We will consider a setting where the weight function $w(x)$ depends only on the first part of the feature, i.e.,
if two data points $X,X'$ belong to the same group (i.e., $X^0= X'{}^0$) then $w(X) = w(X')$. 
As a motivating example, suppose that our data points consist of students within a particular city's school system.
It may be the case that the sampled data (drawn from $P$) are distributed differently across schools $k=1,\dots,K$ 
relative to the general population (which has distribution $Q$), e.g., if the $K$ schools differ in terms of their budget constraints 
or their willingness to participate in the study. However, within each school the study samples students uniformly at random, 
then the difference between the training sample distribution $P$ and the general population (i.e., test data) distribution $Q$
can be captured by a weight function $w(X)$ that depends only on the school to which the individual belongs, i.e., only on the value of $X^0\in [K]$.

The contribution of our paper lies in examining the role of the weight function $w$, and its estimate $\widehat{w}$,
in quantifying the coverage guarantees for prediction intervals constructed via WCP. 
In general, an inaccurate estimate $\widehat{w}$ can lead to a substantial loss of coverage for the WCP prediction intervals.
In the special case of group-weighted conformal prediction,
where $w(X)$ depends only on the group $X^0$ to which the data point belongs, we derive
bounds on the loss of coverage that are far tighter than what is currently known in the existing literature. 

\paragraph{Outline of paper.} 
The remainder of the paper is organized as follows. In Section \ref{section:background}, we give background on the CP and WCP methods,
and the covariate shift setting. In Section \ref{sec:problem_setting}, we introduce the setting of group-wise covariate shift, and explain why the prior guarantees for WCP give weak results in this setting. In Sections \ref{sec:fixed_nk} and \ref{sec:gwcp_cov_shift}, we present our main results
under two different formulations of the problem. Finally, we conclude with a short discussion in Section \ref{sec:discussion}.

\section{Background: conformal prediction under covariate shift}
\label{section:background}
In this section, we give background on the conformal prediction methodology, and its variant, weighted conformal prediction, that allows
for predictive inference in the covariate shift setting.

\subsection{Conformal prediction}
The conformal prediction (CP)framework, which aims to provide prediction intervals with distribution-free validity,
 was initially developed
beginning in the late 1990's (see, e.g., early works by \citep{gammerman1998learning,saunders1999transduction,papadopoulos2002inductive},
and see \citep{vovk2005algorithmic} for a comprehensive overview). The CP methodology 
 quickly gained popularity within the statistics community as a robust tool for regression and classification \citep{lei2018distribution,romano2019conformalized}.
Conformal prediction is a procedure that can be
used in conjunction with any regression procedure or modeling algorithm, to produce prediction intervals with valid marginal coverage guarantees. 

In this section and throughout the paper, we will restrict our attention to one particular version of this procedure, the \emph{split conformal} method (also
called \emph{inductive conformal}) \citep{vovk2005algorithmic,papadopoulos2008inductive}, which separates the process into two steps, model fitting and calibration, each run on a different portion of the data set---this offers a computationally efficient option (relative to the high computational cost of the alternative, \emph{full conformal}, which we do not cover here).
For convenience, we assume that the training data consists of $n^*+n$ many data points, partitioned into data sets
 $\{(X^*_i,Y^*_i)\}_{i\in[n^*]}$ (used for model fitting) and $\{(X_i,Y_i)\}_{i\in[n]}$ (used for calibration). We will refer to these two data sets as the \emph{pretraining set} and the \emph{calibration set}. Of course, in practice we would 
be given a single data set and would then partition it randomly into two parts, i.e., data splitting, but the notation of two separate data sets will be more 
convenient for the exposition of the paper. The split conformal procedure then operates as follows:
\begin{enumerate}
\item \textbf{Model fitting:} using the pretraining data set, $\{(X^*_i,Y^*_i)\}_{i\in[n^*]}$, fit a \emph{score function} $s:\mathcal{X}\times\mathcal{Y}\rightarrow\mathbb{R}$, with large values of the score indicating an unusual data value. In the case of a real-valued response, $\mathcal{Y}=\mathbb{R}$,
a common choice of the score function is the absolute residual $s(x,y) = |y - \widehat{f}(x)|$,
where $\widehat{f}:\mathcal{X}\rightarrow\mathbb{R}$ is a regression function fitted on the pretraining data set.
\item \textbf{Calibration:} using the calibration data set,  $\{(X_i,Y_i)\}_{i\in[n]}$, compute scores $s_i = s(X_i,Y_i)$,
and compute the quantile
\begin{equation}\label{eqn:qhat_splitCP}\widehat{q} = \text{Quantile}_{(1-\alpha)(1+1/n)}\left(s_1,\dots,s_n\right).\end{equation}
This is essentially the $(1-\alpha)$-quantile of the scores on the calibration set, but has a small correction factor that 
accounts for the error incurred by having a finite sample size.
\item \textbf{Prediction:} the prediction interval (or more generally, prediction set) at a new feature value $x\in\mathcal{X}$
is then given by
\[\widehat{C}_n(x) = \left\{y\in\mathcal{Y}: s(x,y)\leq \widehat{q}\right\}.\]
\end{enumerate}
After conditioning on pretraining data set (i.e., so that the score function $s$ can be treated as a fixed function),
the split conformal prediction method offers the following guarantee:
for a test point $(X_{n+1},Y_{n+1})$,
if the data points $(X_1,Y_1),\dots,(X_{n+1},Y_{n+1})$ are exchangeable (e.g., are i.i.d.\ from any distribution),
then
\[\mathbb{P}\left\{Y_{n+1} \in \widehat{C}_n(X_{n+1})\right\} \geq 1 - \alpha.\]
In other words, if the data points  $\{(X_i,Y_i)\}_{i\in[n]}$ used for calibration are drawn from the \emph{same} distribution
as the test data, then predictive coverage is guaranteed.

\subsection{Weighted conformal prediction and covariate shift}\label{sec:background_WCP}
The above guarantee, for split conformal prediction, can be considered to be distribution-free in the sense that the 
calibration and test data points can be drawn from \emph{any} distribution $P$ on $\mathcal{X}\times\mathcal{Y}$.
However, the validity of the method relies on the calibration and test data being drawn from the \emph{same} distribution.
We next turn to the covariate shift setting, which as described above, relaxes this assumption: 
we assume that $(X_1,Y_1),\dots,(X_n,Y_n)$ are drawn i.i.d.\ from some distribution $P$, while the test point $(X_{n+1},Y_{n+1})$
is drawn independently from a potentially different distribution $Q$, but with the restriction that $P_{Y|X} = Q_{Y|X}$.

To address this setting, \citep{tibshirani2019conformal} proposed a modification of CP, the weighted conformal prediction (WCP) method.
Working again in the split conformal framework (so that we can assume a score function $s = s(x,y)$ has been pretrained),
the WCP method modifies the calibration step by computing a \emph{weighted} quantile of scores:
\begin{equation}\label{eqn:qhat_WCP}\widehat{q} = \textnormal{Quantile}_{1-\alpha}\left(\sum_{i=1}^n \frac{w(X_i)}{\sum_{i'=1}^{n+1}w(X_{i'})}\cdot \delta_{s_i} + \frac{w(X_{n+1})}{\sum_{i'=1}^{n+1}w(X_{i'})}\cdot\delta_{+\infty}\right),\end{equation}
where $\delta_s$ is the point mass at $s$, and where $w(x)$ captures the shift from the training feature distribution $P_X$ 
to the test feature distribution $Q_X$. The prediction interval is then again given by $\widehat{C}_n(x) = \left\{y\in\mathcal{Y}: s(x,y)\leq \widehat{q}\right\}$,
as for the unweighted case.
To compare to the unweighted version of the method, we can observe
that the quantile $\widehat{q}$ computed in~\eqref{eqn:qhat_splitCP} is a special case of ~\eqref{eqn:qhat_WCP} with a constant weight function $w(x)\equiv 1$---that is, in the case $P_X=Q_X$, where there is no covariate shift.

In the case where the weight function $w(x)$, which characterizes the covariate shift, is known exactly, the WCP method
offers a guarantee of predictive validity:
\[\mathbb{P}_{P^n\times Q}\left\{Y_{n+1}\in\widehat{C}_n(X_{n+1})\right\}\geq 1-\alpha,\]
where probability is taken with respect to the joint distribution $P^n\times Q$, i.e., the training data points $\{(X_i,Y_i)\}_{i\in[n]}$ used for calibration
are drawn i.i.d.\ from $P$, while the test point $(X_{n+1},Y_{n+1})$ is drawn from $Q$.
Specifically, we are assuming that $w(x) \propto \frac{\mathsf{d}Q_X}{\mathsf{d}P_X}(x)$, the Radon--Nikodym derivative
relating the feature distributions $P_X$ and $Q_X$; for instance, as mentioned before, in the special case that $P_X, Q_X$ have densities $f,g$, respectively 
(with respect to any base measure), this is equivalent to requiring $w(x)\propto g(x)/f(x)$.

The WCP method has subsequently been applied to a range of statistical problems that can be reformulated as instances
of covariate shift,
including applications to survival analysis \citep{gui2022conformalized,candes2023conformalized}, causal inference \citep{lei2021conformal,yin2022conformal,jin2023sensitivity}, and adaptive learning \citep{fannjiang2022conformal}.

\paragraph{WCP with estimated weights.}
In practice, 
a major challenge in the implementation of the WCP procedure is that the weight function 
 $w(x) \propto \frac{\mathsf{d}Q_X}{\mathsf{d}P_X}(x)$ is not known exactly---indeed,
\citep{tibshirani2019conformal}'s initial results for WCP offer no theoretical guarantees for an estimated $w(x)$,
although their empirical results demonstrate that estimating $w$ on a large sample performs well in practice.
More recent results in the literature offer guarantees that quantify the extent to which errors in estimating $w$
can lead to loss of coverage. For example, 
\citep{lei2021conformal} prove the following guarantee. Assume we are given a \emph{fixed} function $\widehat{w}(x)$ (or more generally, a function $\widehat{w}(x)$ that 
is estimated independently of the calibration data and test point and can thus be treated as fixed---for instance, $\widehat{w}(x)$ may be fitted on the training set). If the true and estimated weights are normalized to satisfy $\mathbb{E}_{P_X}[w(X)]= \mathbb{E}_{P_X}[\widehat{w}(X)] = 1$, then
the WCP predictive interval satisfies
\begin{equation}\label{eqn:L1_w_bound}
\mathbb{P}_{P^n\times Q}\left\{Y_{n+1}\in\widehat{C}_n(X_{n+1})\right\}\geq 1-\alpha -\mathbb{E}_{P_X}\left[\frac{\left|\widehat{w}(X) - w(X)\right|}{2}\right].\end{equation}
Related results, also establishing bounds on the loss of coverage in terms of the difference $\widehat{w}-w$,
can be found throughout the literature, e.g., see \citet{candes2023conformalized,gui2022conformalized,yang2022doubly,jin2023sensitivity,yin2022conformal}.

\paragraph{An alternative approach: within-group coverage.}
Finally, we mention another line of the existing literature, which seeks to provide group-conditional coverage type guarantees (i.e., in the notation of this work, guarantees of the type $\mathbb{P}\{Y_{n+1}\in\widehat{C}_n(X_{n+1}) \mid X_{n+1}^0 = k\}$). For instance, works such as \citep{dunn2022distribution, ding2023class} study problems related to this framework. Any guarantee of group-conditional coverage will automatically yield coverage with respect to a group-wise covariate shift---that is, this type of guarantee is strictly stronger than coverage with respect to Q, which we study in this paper. However, a group-conditional guarantee is meaningfully achievable only when within-group sample size, $n_k$, is fairly large, while our work allows for small $n_k$ (as long as $K$, the number of groups, is large). Additional related approaches in the literature include the work of \citep{gupta2020distribution} on bin-wise calibration, \citep{gibbs2023conformal},\citep{jung2022batch} which focus on group conditional coverage guarantees.

\section{Problem setting: group-wise covariate shift}
\label{sec:problem_setting}
As described in Section~\ref{sec:intro}, in this paper we consider the setting where the covariate shift arises from the presence of different groups
or subpopulations within the data, which have different probabilities under the training versus test distributions. Specifically, we assume that
the distribution $P$ of a training data point $(X,Y) = (X^0,X^1,Y)$ has the form
\begin{equation}\label{eqn:define_P_hier}\begin{cases} X^0 \sim \textnormal{Multinomial}(p_1,\dots,p_K),\\
(X^1, Y) \mid (X^0 = k) \sim \Pi_k,\end{cases}\end{equation}
i.e., the training data point is drawn from group $k$ with probability $p_k$, and then conditional on being drawn from group $k$,
the remaining features $X^1$ and response $Y$ are drawn from some joint distribution $\Pi_k$. The test data distribution $Q$
is instead given by
\begin{equation}\label{eqn:define_Q_hier}\begin{cases} X^0 \sim \textnormal{Multinomial}(q_1,\dots,q_K),\\
(X^1, Y) \mid (X^0 = k) \sim \Pi_k,\end{cases}\end{equation}
so that the test point is sampled from group $k$ with probability $q_k$ rather than $p_k$,
but then conditional on being drawn from group $k$, the remaining features and response $(X^1,Y)$ follow the \emph{same}
joint distribution $\Pi_k$. 

In this setting, the weight function $w$ characterizing the covariate shift is simple to compute: at a data point $X = (X^0,X^1)$,
the weight function is given by
\[w(X) \propto q_k/p_k \]
if $X^0 = k$, for each $k\in[K]$. In particular, implementing WCP and thus ensuring a predictive coverage
guarantee relies only on knowing these weights; coverage does not rely on any knowledge
or assumptions for the distributions $\Pi_k$, which determine the feature and response
distributions within each group $k$, since these are assumed to be the same across the training and test data.

For most of this paper, we will assume that the test proportions $q_k$ are known. Effectively, we are assuming that there is ample
unlabeled data i.e., test points, for which we observe $X$ but not $Y$, so that we can estimate the proportion of subgroups
within the target population with extremely high accuracy or we simply want coverage with respect to the empirical assignment of test points to the different groups. In Section~\ref{sec:unknown_qk}, we will extend to the setting where these proportions are unknown, and the $q_k$'s are instead estimated empirically.
Alternatively we can interpret the assumption that the $q_k$'s are known in a different way: we are simply requiring that coverage holds relative to some prespecified distribution over the $K$ groups. For example, if we set $q_k\equiv 1/K$, this indicates that we are interested in the average group-wise coverage.
The problem of estimating a weight function $w(X)$ then reduces to the question of 
estimating the proportions $p_1,\dots,p_K$, to characterize the sampling bias of the training data, drawn from $P$, 
as compared to the target distribution $Q$.

\subsection{Group-weighted conformal prediction (GWCP)}\label{sec:define_GWCP}

We now define the group-weighted conformal prediction (GWCP) method that will be the focus of the remainder of the paper.
As before, we work within a split conformal framework, where we are given a pretrained score function $s=s(x,y)$ (e.g., the residual score, $s(x,y) = |y-\widehat{f}(x)|$, where $\widehat{f}$ is a pretrained regression model), and where we evaluate the scores $s_i = s(X_i,Y_i)$ for the calibration data points, $i\in[n]$.

We define the group-weighted split conformal prediction interval as follows. Let  $n_k = \sum_{i\in[n]}\mathbf{1}_{X^0_i=k}$ denote the number of observations in the calibration set that belong to the $k$th group, and define
\[\widehat{P}_{\textnormal{score}}^{(k)} = \frac{1}{n_k}\sum_{i\in[n], X^0_i = k} \delta_{s_i},\]
which is the empirical distribution of scores for training data points in group $k$, for any $k$ with $n_k>0$. If instead $n_k=0$ (i.e., group $k$ does not appear
at all in the calibration data set), then we simply take $\widehat{P}_{\textnormal{score}}^{(k)} = \delta_{+\infty}$, the point mass at infinity. Finally, define
\begin{equation}\label{eqn:GWCP}
\widehat{C}_n(x) = \left\{y\in\mathcal{Y}: s(x,y) \leq \widehat{q}\right\}\textnormal{ where }\widehat{q} = \textnormal{Quantile}_{1-\alpha}\left(\sum_{k=1}^K q_k \widehat{P}_{\textnormal{score}}^{(k)}\right).
\end{equation}
In other words, the prediction interval is defined via a score threshold given by the $(1-\alpha)$-quantile of the following distribution: compute the empirical distribution $\widehat{P}_{\textnormal{score}}^{(k)}$ of scores from data points in group $k$, and then take the mixture of these empirical distributions by placing weight $q_k$ on the $k$th component, for each group $k\in[K]$. A useful feature of this construction is that we do not require the group assignments of the points in the test set to be known.

\citet{dunn2022distribution} used this kind of CDF pooling technique in the context of a two-layer hierarchical model, but only asymptotic guarantees are shown; this type of approach is also studied by \citet{lee2023distribution}, who show finite sample guarantees, but again in a different setting where new groups are sampled from a hierarchical model.

\paragraph{Comparing  to WCP.}
To better compare to our earlier notation for the WCP method, we can rewrite the threshold $\widehat{q}$ in~\eqref{eqn:GWCP} as follows:
\[
\widehat{q} = \textnormal{Quantile}_{1-\alpha}\left(\sum_{i=1}^n \frac{q_{X^0_i}}{n_{X^0_i}}\cdot \delta_{s_i}\right).
\]
In other words, each data point $i\in[n]$ is given weight $q_{X^0_i}/n_{X^0_i}$ in this weighted quantile calculation.

We will now see that we can view this as a \emph{slightly less conservative} version of the WCP method.
To see why, suppose we let $\widehat{p}_k = n_k/n$---that is, if the calibration data points $(X_1,Y_1),\dots,(X_n,Y_n)$
are sampled from the $K$ groups with probabilities $p_1,\dots,p_K$, as in~\eqref{eqn:define_P_hier}, then $\widehat{p}_k$
is an estimate of $p_k$ for each $k$.
Then, for any $X$ with $X^0=k$, the weight for the covariate shift under this model is given by $w(X) = q_k/p_k$;
we can estimate this weight with $\widehat{w}(X) = q_k / \widehat{p}_k$. 
In this comparison, we will assume for simplicity that $n_k>0$ for all $k$, i.e., all groups are observed in the training set---otherwise
the function $\widehat{w}$ would not be well-defined.
By definition,
 we have $\sum_{i\in[n]}\widehat{w}(X_i) = n$.
Then rewriting the definition of $\widehat{q}$ one more time, 
\begin{equation}\label{eqn:GWCP_alt_notation}
\widehat{q} = \textnormal{Quantile}_{1-\alpha}\left(\sum_{i=1}^n \frac{\widehat{w}(X_i)}{\sum_{i'=1}^{n}\widehat{w}(X_{i'})}\cdot \delta_{s_i}\right).
\end{equation}
In contrast, if we were to run WCP with the weight function $\widehat{w}$,\footnote{While the theory for WCP, described
above in Section~\ref{sec:background_WCP}, requires the estimated weight function $\widehat{w}$ to be prefitted---i.e., independent of the calibration and test data---the 
WCP algorithm itself can nonetheless be defined with a data-dependent $\widehat{w}$.} the prediction set would be given by 
$\widehat{C}_n(x) = \{y\in\mathcal{Y}:s(x,y)\leq \widehat{q}_+\}$, where
\begin{equation}\label{eqn:GWCP_plus}\widehat{q}_+ = \textnormal{Quantile}_{1-\alpha}\left(\sum_{i=1}^n \frac{\widehat{w}(X_i)}{\sum_{i'=1}^{n+1}\widehat{w}(X_{i'})}\cdot \delta_{s_i} + \frac{\widehat{w}(X_{n+1})}{\sum_{i'=1}^{n+1}\widehat{w}(X_{i'})}\cdot\delta_{+\infty}\right).\end{equation}
That is, the definition of the threshold $\widehat{q}$ for GWCP is strictly less conservative
since it is exactly the same as the WCP threshold $\widehat{q}_+$ except that the weight on $\delta_{+\infty}$ has been removed.

As we will
see below, the prediction interval defined here in~\eqref{eqn:GWCP}
leads to a coverage guarantee that is slightly smaller than the target level $1-\alpha$. Interestingly, however, this cannot
be fixed by returning to the slightly more conservative WCP method given in~\eqref{eqn:qhat_WCP}.
We choose to work with this present formulation,~\eqref{eqn:GWCP}, since avoiding the additional weight on $\delta_{+\infty}$
leads to a simpler construction and a (slightly) less conservative method without any additional loss of coverage in
the theoretical guarantee. Of course, any coverage guarantee that can be established for the prediction interval defined in~\eqref{eqn:GWCP}
will also hold for the version of WCP that adds a weight to $\delta_{+\infty}$, since this leads to a strictly more conservative
prediction interval.

\subsection{Estimation error in the group-wise setting: applying prior results}\label{sec:problem_setting_prior_bd}
As mentioned above, in this special case, where the covariate shift is solely determined by the $K$ groups, the problem of estimating $w$ is easier:
in particular, we only need to estimate a $K$-dimensional parameter, i.e., $(p_1,\dots,p_K)$
(since we have assumed the $q_k$'s are known),  in contrast
to the  general
case, where estimating an arbitrary function $w(x)$ is an infinite-dimensional problem.

However, even in this  setting of a finite-dimensional unknown parameter,
 existing coverage guarantees provide a fairly pessimistic view and imply that the loss of coverage
may be severe.
In particular, if our estimate of the probabilities $p_1,\dots,p_K$ (which determine the relative proportions of the $K$ 
groups in  the training distribution $P$) is based on a sample of size $\mathcal{O}(n)$, then
we would expect that each $p_k$ is estimated with error scaling as $\mathcal{O}(\sqrt{p_k/n})$ (due to the variance of a Binomial
distribution); equivalently, the \emph{relative} error in estimating $p_k$ scales as $\mathcal{O}(1/\sqrt{np_k})$.
For example, if $p_k \equiv 1/K$ (i.e., the true distribution places equal weight on each group), then 
the relative error in estimating each $p_k$ scales as $\mathcal{O}(\sqrt{K/n})$, and so
the coverage loss term in \citep{lei2021conformal}'s guarantee~\eqref{eqn:L1_w_bound} can be expected to scale as
\[\mathbb{E}\left[\frac{\left|\widehat{w}(X) - w(X)\right|}{2}\right] \asymp \sqrt{\frac{K}{n}}.\]

To demonstrate this intuition more concretely, we compute \citep{lei2021conformal}'s coverage guarantee~\eqref{eqn:L1_w_bound} empirically,
across a range of sample sizes $n = 100,110,120,\dots,1000$, in three regimes: a constant number of groups, $K=10$; a slowly increasing
number of groups, $K=\lfloor \sqrt{n}\rfloor$; and a proportional number of groups, $K = n/10$. In each case, the probabilities are given
by $p_k = q_k \equiv 1/K$ for all groups $k\in[K]$, for both the training and test distributions.
The estimated weight function $\widehat{w}$ is then given by $\widehat{w}(X) = q_k / \widehat{p}_k$ for any data point
with $X^0 = k$, where $q_k\equiv 1/K$ is known, while $\widehat{p}_k$ is given by the empirical
proportion of group $k$ in a pretraining sample of size $n$. To avoid dividing by zero, we add a +1 adjustment to each group's 
count---that is, $\widehat{p}_k = \frac{n_k+1}{n+K}$,
where $n_k$ is the number of samples observed in group $k$ within the training set of size $n$ (since the true proportions $p_k$ are uniform, this adjustment can only make the estimates more accurate).

Figure~\ref{fig:WCP_demo_L1_bound} illustrates the guaranteed coverage level~\eqref{eqn:L1_w_bound}  for each 
choice of $n$ and $K$, where the term $\mathbb{E}\left[\frac{\left|\widehat{w}(X) - w(X)\right|}{2}\right]$
is estimated empirically over $100$ trials.\footnote{Code for reproducing all figures in the paper is available at \url{https://github.com/aabeshb/Group-Weighted-Conformal-Prediction}.}
We can see that, even though estimating $w$ is a finite-dimensional parameter estimation problem, 
the resulting guarantee suggests that there may be substantial loss of coverage; in the proportional regime, $K=n/10$,
the guaranteed coverage level does not even converge to $1-\alpha$ as $n\rightarrow\infty$.

\begin{figure}[t]
    \centering
    \includegraphics[width = 0.8\textwidth]{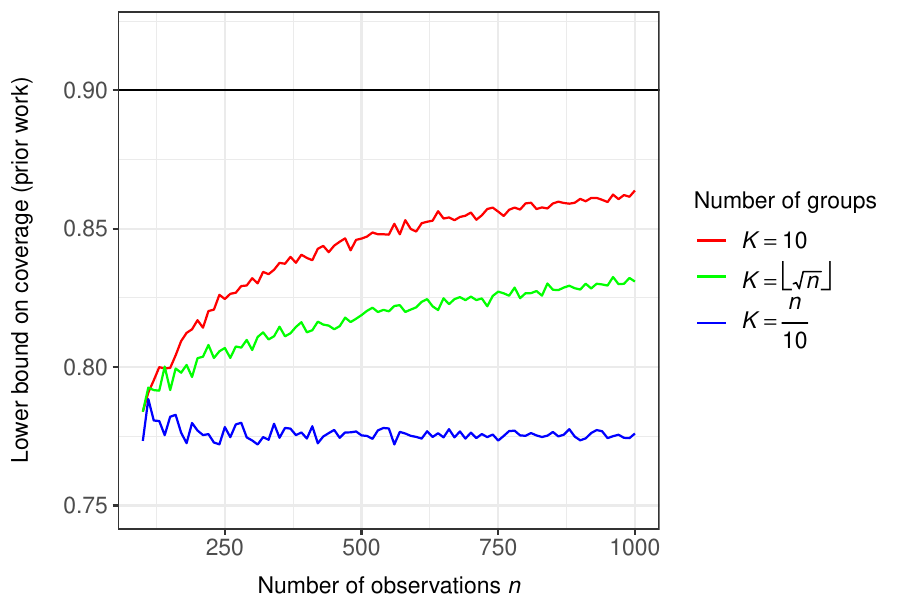}
    \caption{\citep{lei2021conformal}'s WCP coverage guarantee~\eqref{eqn:L1_w_bound} for varying numbers of groups.
    The horizontal line marks the target coverage level 90\% (i.e., $\alpha = 0.1$). See Section~\ref{sec:problem_setting_prior_bd} for details.}
    \label{fig:WCP_demo_L1_bound}
\end{figure}

\section{GWCP with fixed group sizes}\label{sec:fixed_nk}

We now turn to the theoretical properties of the GWCP prediction interval.
We begin by considering a different sampling framework: suppose that the within-group sample sizes $n_1,\dots,n_k \geq 0$ are \emph{fixed} rather than random. 

The problem setting can be formalized as follows: independently for each group $k\in[K]$, we sample $n_k$ many calibration points
from the $k$th group,
\begin{equation}\label{eqn:distrib_fixed_nk}X^0_i=k, \ (X^1_i,Y_i)\stackrel{\textnormal{iid}}{\sim} \Pi_k,\textnormal{ \ for each $i\in\{n_1 + \dots + n_{k-1} + 1, \dots, n_1 + \dots + n_{k-1} + n_k\}$}.\end{equation}
We then want to ensure predictive coverage with respect to a test point $(X,Y) = (X^0,X^1,Y)$ drawn from the target distribution $Q$, defined as in~\eqref{eqn:define_Q_hier}, as before.
At a high level, we can think of this framework as strictly harder than the random setting, because here we will require coverage to hold for any fixed $n_1,\dots,n_K$, while in the random setting described above in Section~\ref{sec:problem_setting}, a marginal coverage guarantee only requires coverage to hold \emph{on average} over the draw of $n_1,\dots,n_K$ (i.e., when the calibration data points are sampled i.i.d.\ from $P$, as in~\eqref{eqn:define_P_hier}).

\subsection{Theoretical guarantee}
Our first main result establishes a marginal coverage guarantee for the fixed group size setting.

\begin{theorem}[Coverage under fixed group sizes.]\label{thm:GWCP_fixed_nk}
Suppose the training data points $(X_1,Y_1),\dots,(X_n,Y_n)$ are sampled from the fixed-group-size model, as in~\eqref{eqn:distrib_fixed_nk},
while the test point $(X_{n+1},Y_{n+1})$ is drawn independently from the distribution $Q$ as defined in~\eqref{eqn:define_Q_hier}.
Then, for any fixed (i.e., pretrained) score function $s:\mathcal{X}\times\mathcal{Y}\rightarrow\mathbb{R}$, the GWCP prediction interval $\widehat{C}_n$ defined in~\eqref{eqn:GWCP} satisfies
    \[\mathbb{P}\left\{Y_{n+1}\in\widehat{C}_n(X_{n+1})\right\} \geq 1 - \alpha - \max_{k:n_k>0} \{q_k/n_k\}.\]
\end{theorem}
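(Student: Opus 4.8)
The plan is to convert the coverage statement into a comparison between the test point's true conditional coverage and an empirical sub-quantile mass, and then to absorb the remaining gap into the size of the largest atom in the calibration mixture. Write $F_k$ for the CDF of the score $s(X,Y)$ under $(X^1,Y)\sim\Pi_k$, let $F^* = \sum_{k=1}^K q_k F_k$ be the CDF of the score of a test point drawn from $Q$ (these coincide with the calibration score distributions because the within-group laws $\Pi_k$ are shared), and let $\widehat\mu = \sum_{k=1}^K q_k \widehat{P}_{\textnormal{score}}^{(k)}$ be the empirical mixture whose $(1-\alpha)$-quantile is $\widehat q$. Since $\widehat q$ is a function of the calibration data alone while the test point is drawn independently, conditioning on the calibration data gives
\[\mathbb P\{Y_{n+1}\in\widehat C_n(X_{n+1})\} = \mathbb P\{s(X_{n+1},Y_{n+1})\le \widehat q\} = \mathbb E\big[F^*(\widehat q)\big],\]
so it suffices to prove $\mathbb E[F^*(\widehat q)] \ge 1-\alpha - \max_{k:n_k>0} q_k/n_k$.

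First I would record a deterministic fact about the empirical quantile. By definition of $\widehat q$ we have $\widehat\mu((-\infty,\widehat q]) \ge 1-\alpha$, while the jump of $\widehat\mu$ at $\widehat q$ is at most its largest atom, namely $\max_{k:n_k>0} q_k/n_k$ (each calibration point in group $k$ carries mass $q_k/n_k$). Subtracting the jump gives, pointwise, $\widehat\mu((-\infty,\widehat q)) \ge 1-\alpha - \max_{k:n_k>0} q_k/n_k$, hence $\mathbb E[\widehat\mu((-\infty,\widehat q))] \ge 1-\alpha - \max_{k:n_k>0} q_k/n_k$. It then remains only to establish the single inequality
\[\mathbb E\big[F^*(\widehat q)\big] \;\ge\; \mathbb E\big[\widehat\mu((-\infty,\widehat q))\big],\]
i.e.\ that in expectation the true coverage at the random threshold $\widehat q$ dominates the empirical mass placed strictly below it. Written group-by-group, with $\widehat F_k$ the within-group empirical CDF, this is the claim $\sum_k q_k\,\mathbb E[F_k(\widehat q)] \ge \sum_k q_k\,\mathbb E[\widehat F_k(\widehat q^-)]$.

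The heart of the argument, and the step I expect to be the main obstacle, is this last inequality: because $\widehat q$ depends on each group's own calibration scores, $\mathbb E[\widehat F_k(\widehat q^-)]$ is \emph{not} simply $\mathbb E[F_k(\widehat q)]$, and there is a finite-sample bias arising from a point's influence on the very threshold it is compared against. My plan is to control this bias by a within-group coupling that exploits exchangeability: fix a group $k$ and a calibration index $i$ in it, introduce an independent fresh draw $\widetilde Z\sim\Pi_k$, and compare $\widehat q$ (computed using $s_i$) with the threshold $\widehat q'$ obtained by substituting $\widetilde Z$ for $s_i$. Since $s_i$ and $\widetilde Z$ are i.i.d.\ and independent of all other data, the substitution is a measure-preserving swap that re-expresses $\mathbb E[\widehat F_k(\widehat q^-)]$ through $\widetilde Z$ and $\widehat q'$; monotonicity of the quantile in its support points (moving the atom at $s_i$ upward can only raise the threshold) then lets me compare with $\mathbb E[F_k(\widehat q)]$ and argue the aggregate bias is nonnegative. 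The delicate feature is that the comparison need not be signed term-by-term: the negative contributions must be shown to cancel once summed against the \emph{shared} threshold $\widehat q$, and it is exactly this global coupling through a single threshold that rules out a naive group-by-group bound — one that treats the other groups adversarially and would lose a factor of $K$, yielding only $\approx 1-\alpha/q_k$ per group. Finally I would dispatch the boundary cases: groups with $n_k=0$ contribute $\delta_{+\infty}$, which only enlarges $\widehat C_n$ and thus aids coverage (consistent with the restriction $k:n_k>0$ in the bound), and ties among scores are handled by bounding the overshoot through the actual jump of $\widehat\mu$ at $\widehat q$ together with a standard tie-breaking argument.
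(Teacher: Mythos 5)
Your skeleton matches the paper's proof in its two main ingredients---reducing coverage to $\sum_k q_k\,\mathbb{P}\{s_{k,0}\le\widehat{q}\}$ for a fresh per-group draw $s_{k,0}\sim$ (score under $\Pi_k$), and a within-group exchangeability swap of $s_{k,0}$ with a calibration score $s_{k,i}$---but there is a genuine gap at exactly the step you yourself call the heart of the argument. The inequality $\mathbb{E}[F^*(\widehat{q})]\ge\mathbb{E}[\widehat{\mu}((-\infty,\widehat{q}))]$ is never established: after the swap, one must compare $\mathbb{P}\{s_{k,0}\le\textnormal{Quantile}_{1-\alpha}(\widehat{\mu})\}$ against $\mathbb{P}\{s_{k,0}<\textnormal{Quantile}_{1-\alpha}(\widehat{\mu}^{k,i})\}$, where $\widehat{\mu}^{k,i}$ substitutes $s_{k,0}$ for $s_{k,i}$, and as you correctly observe the pathwise comparison of these two quantiles is unsigned (when $s_{k,0}>s_{k,i}$, the substituted quantile can jump strictly above $s_{k,0}$). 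Your proposal that ``the negative contributions must be shown to cancel once summed'' is precisely the missing argument: you give no mechanism for the cancellation, and monotonicity of the quantile in its support points does not supply one. Since this inequality, combined with an elementary jump bound, \emph{is} the theorem, the proof as written is incomplete at its only substantive step. The paper closes this same gap differently: rather than signing the bias, it absorbs the swap discrepancy into the quantile \emph{level}, using $\textnormal{d}_{\textnormal{TV}}(\widehat{\mu},\widehat{\mu}^{k,i})\le q_k/n_k$ to conclude $\textnormal{Quantile}_{1-\alpha}(\widehat{\mu}^{k,i})\ge\textnormal{Quantile}_{1-\alpha'}(\widehat{\mu})$ with $\alpha'=\alpha+\max_k\{q_k/n_k\}$, and then invokes the deterministic fact that $\widehat{\mu}$ places mass at least $1-\alpha'$ at or below its own $(1-\alpha')$-quantile. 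In the paper the error term enters there, not through the jump of $\widehat{\mu}$ at $\widehat{q}$ as in your outline.

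For what it is worth, your key claim is in fact true, and can even be proved term by term in probability (though not pathwise)---but by a different mechanism than the one you sketch. Writing $\widehat{\mu}^{-(k,i)}$ for $\widehat{\mu}$ with the atom at $s_{k,i}$ deleted, one has the event identities $\{s_{k,i}<\widehat{q}\}=\{\widehat{\mu}^{-(k,i)}((-\infty,s_{k,i}])<1-\alpha-q_k/n_k\}$ and $\{s_{k,0}<\widehat{q}\}=\{\widehat{\mu}^{-(k,i)}((-\infty,s_{k,0}])+\tfrac{q_k}{n_k}\mathbf{1}\{s_{k,i}\le s_{k,0}\}<1-\alpha\}$; dropping the indicator in favor of the constant $q_k/n_k$ and then using that $s_{k,0},s_{k,i}$ are i.i.d.\ and independent of $\widehat{\mu}^{-(k,i)}$ gives $\mathbb{P}\{s_{k,0}\le\widehat{q}\}\ge\mathbb{P}\{s_{k,i}<\widehat{q}\}$, which sums to your claim. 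This is an argument you would need to supply; it compares CDF values of the leave-one-out measure rather than the two quantiles you propose to compare. Separately, your deterministic step $\widehat{\mu}((-\infty,\widehat{q}))\ge 1-\alpha-\max_{k:n_k>0}\{q_k/n_k\}$ is false in the presence of ties: the jump of $\widehat{\mu}$ at $\widehat{q}$ is the \emph{total} mass of all calibration points tied at that value, not a single atom, and the theorem is stated for arbitrary $\Pi_k$ and score functions, so ties cannot be assumed away. A jittering argument can repair this, but it must actually be carried out (jitter all scores by $\epsilon\,U$, apply the no-ties bound, note $\widehat{q}\le\widehat{q}^{\epsilon}\le\widehat{q}+\epsilon$, and let $\epsilon\downarrow 0$); the paper's route sidesteps the issue entirely, since its quantile-mass lemma and total-variation step hold verbatim with ties.
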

In the special case where $q_k \equiv 1/K$, i.e., the target distribution for the test data places equal weight on each of the $K$ groups, 
we can rewrite this guarantee as
\begin{equation}\label{eqn:uniform_case_thm:GWCP_fixed_nk}
\mathbb{P}\left\{Y_{n+1}\in\widehat{C}_n(X_{n+1})\right\} \geq 1 - \alpha - \frac{1}{K  \min_{k:n_k>0} n_k}.\end{equation}
We emphasize that this theorem shows the potential loss of coverage is low as long as \emph{either} the number of groups
$K$ or the minimum group size $n_k$ is fairly large. We will examine the strength of this result more later on, when we compare
to previous guarantees.
Moreover, in Appendix~\ref{app:tight_thm:GWCP_fixed_nk}, we will verify that this lower bound on coverage is tight,
while in Appendix~\ref{app:GWCP_exact_coverage}, we will define a slightly more conservative version of the method 
that can be used if we wish to ensure a coverage
guarantee of at least $1-\alpha$, without an error term.

We now turn to the proof of the theorem.
\begin{proof}[Proof of Theorem~\ref{thm:GWCP_fixed_nk}]   
First we prove the result for the case that $n_k>0$ for all $k$. We begin by defining some notation.
For convenience of the proof, we will redefine the indexing of the training set: 
for each $k$, and for $i=1,\dots, n_k$, let 
\[(X_{k,i},Y_{k,i}) = (X_{n_1 + \dots + n_{k-1} + i},Y_{n_1 + \dots + n_{k-1} + i}).\]
That is, the data points are now indexed within each group, rather than across the entire data set of size $n$.
Next let
\[\widehat{P}^{(k)}_{\textnormal{score}} = \frac{1}{n_k}\sum_{i=1}^{n_k}\delta_{s(X_{k,i},Y_{k,i})}\]
be the empirical distribution of the training data set scores within the $k$th group, as in the definition of the GWCP method, and let
\[\widehat{P}_{\textnormal{score}} = \sum_{k=1}^K q_k \cdot \widehat{P}^{(k)}_{\textnormal{score}}\]
be the weighted mixture across groups. Then the threshold $\widehat{q}$ for GWCP, defined in~\eqref{eqn:GWCP}, 
can equivalently be written as $\widehat{q} = \textnormal{Quantile}_{1-\alpha}(\widehat{P}_{\textnormal{score}})$,
and thus the
 GWCP prediction set is given by
\[\widehat{C}_n(x) = \left\{y\in\mathcal{Y}: s(x,y)\leq \textnormal{Quantile}_{1-\alpha}(\widehat{P}_{\textnormal{score}})\right\}.\]

Next, independently for each $k\in[K]$, sample a new data point in the $k$th group,
\[(X_{k,0},Y_{k,0}) = (k , X^1_{k,0},Y_{k,0}) \textnormal{ where }(X^1_{k,0},Y_{k,0})\sim \Pi_k,\]
independently from the training data.
Then, since the test point $(X_{n+1},Y_{n+1})$ is sampled from the mixture distribution $Q$ which is defined by 
placing weight $q_k$ on each of the $k$ groups, the coverage probability can equivalently be expressed as
\begin{multline*}\mathbb{P}\left\{Y_{n+1}\in\widehat{C}_n(X_{n+1})\right\} 
= \sum_{k=1}^K q_k \mathbb{P}\left\{Y_{k,0}\in\widehat{C}_n(X_{k,0})\right\} \\
= \sum_{k=1}^K q_k \mathbb{P}\left\{s(X_{k,0},Y_{k,0})\leq \textnormal{Quantile}_{1-\alpha}(\widehat{P}_{\textnormal{score}})\right\} .\end{multline*}

Next, for each $k$ and each $i=1,\dots,n_k$, we define a modified version of the weighted empirical distribution $\widehat{P}_{\textnormal{score}}$:
\[\widehat{P}_{\textnormal{score}}^{k,i}=\widehat{P}_{\textnormal{score}} + \frac{q_k}{n_k}\left(\delta_{s(X_{k,0},Y_{k,0})} - \delta_{s(X_{k,i},Y_{k,i})}\right).\]
    In particular, for $i=0$, this is the same distribution as before, i.e., $\widehat{P}_{\textnormal{score}}^{k,0} = \widehat{P}_{\textnormal{score}}$,
    for each $k$. On the other hand, for $i\geq 1$, $\widehat{P}_{\textnormal{score}}^{k,i}$ modifies the original weighted empirical distribution by replacing data point $(X_{k,i},Y_{k,i})$ with data point $(X_{k,0},Y_{k,0})$.
    
With this new notation in place, we can then rewrite the coverage probability as
\begin{equation}\label{eqn:key_step_thm:GWCP_exact_coverage}\mathbb{P}\left\{Y_{n+1}\in\widehat{C}_n(X_{n+1})\right\} = \sum_{k=1}^K q_k\mathbb{P}\left\{s(X_{k,0},Y_{k,0}) \leq \textnormal{Quantile}_{1-\alpha}(\widehat{P}_{\textnormal{score}}^{k,0})\right\}.\end{equation}

For each $k$, since $(X_{k,0},Y_{k,0}),\dots,(X_{k,n_k},Y_{k,n_k})$ are i.i.d.\ and are therefore exchangeable, by symmetry we must have
    \[\mathbb{P}\left\{s(X_{k,0},Y_{k,0}) \leq \textnormal{Quantile}_{1-\alpha}(\widehat{P}_{\textnormal{score}}^{k,0})\right\} = \mathbb{P}\left\{s(X_{k,i},Y_{k,i}) \leq \textnormal{Quantile}_{1-\alpha}(\widehat{P}_{\textnormal{score}}^{k,i})\right\}\]
    for every $i=1,\dots,n_k$, and therefore,
    \[\mathbb{P}\left\{Y_{n+1}\in\widehat{C}_n(X_{n+1})\right\} = \sum_{k=1}^K\sum_{i=1}^{n_k} \frac{q_k}{n_k}\mathbb{P}\left\{s(X_{k,i},Y_{k,i}) \leq \textnormal{Quantile}_{1-\alpha}(\widehat{P}_{\textnormal{score}}^{k,i})\right\}.\]
    Moreover, defining $\alpha' = \alpha + \max_k \{ q_k/n_k\}$, we must have $\textnormal{Quantile}_{1-\alpha}(\widehat{P}_{\textnormal{score}}^{k,i}) \geq \textnormal{Quantile}_{1-\alpha'}(\widehat{P}_{\textnormal{score}})$ for every $k,i$ almost surely.
    This is because the total variation distance between these two weighted empirical distributions is bounded as $\textnormal{d}_{\textnormal{TV}}(\widehat{P}_{\textnormal{score}},\widehat{P}_{\textnormal{score}}^{k,i})\leq  q_k/n_k$ (recall that we have assumed $n_k>0$ for all $k$). Therefore,
    \begin{multline*}\mathbb{P}\left\{Y_{n+1}\in\widehat{C}_n(X_{n+1})\right\} \geq  \sum_{k=1}^K\sum_{i=1}^{n_k} \frac{ q_k}{n_k}\mathbb{P}\left\{s(X_{k,i},Y_{k,i}) \leq \textnormal{Quantile}_{1-\alpha'}(\widehat{P}_{\textnormal{score}})\right\}\\ 
    =\mathbb{E}\left[ \sum_{k=1}^K\sum_{i=1}^{n_k} \frac{ q_k}{n_k}\mathbf{1}\left\{s(X_{k,i},Y_{k,i}) \leq \textnormal{Quantile}_{1-\alpha'}(\widehat{P}_{\textnormal{score}})\right\} \right].\end{multline*}
    Finally, the quantity inside the last expected value must be $\geq 1-\alpha'$, almost surely, by definition of the weighted empirical distribution $\widehat{P}_{\textnormal{score}}$ (see, e.g., \citet[Lemma A.1]{harrison2012conservative}).   

Now we turn to the case where we may have $n_k=0$ for some $k$. Define $q_+ = \sum_{k: n_k>0} q_k$. 
If $1-\alpha > q_+$, then deterministically we must have $\widehat{q}=+\infty$ in the definition of the method~\eqref{eqn:GWCP}, which implies coverage holds with probability $1$. from this point on, then, we will consider the case $1-\alpha \leq q_+$.

Let $Q'$ be the distribution of $(X,Y)\sim Q$ conditional on the event that $X^0=k$ for some $k$ with $n_k>0$, and let $Q''$ be the same conditional on the event that $X^0=k$ for some $k$ with $n_k=0$, such that we can express the target distribution $Q$ as the mixture
    \[Q = q_+ \cdot Q' + (1-q_+)\cdot Q''.\]
    Then
    \[\mathbb{P}_Q\left\{Y_{n+1}\in\widehat{C}_n(X_{n+1})\right\} \geq q_+ \cdot \mathbb{P}_{Q'}\left\{Y_{n+1}\in\widehat{C}_n(X_{n+1})\right\},\]
    where the subscript on $\mathbb{P}$ indicates the distribution from which the test point $(X_{n+1},Y_{n+1})$ is drawn.
    Next observe that by construction, we have
    \[\widehat{P}_{\textnormal{score}} = q_+ \cdot \widehat{P}_{\textnormal{score}}' + (1-q_+) \cdot \delta_{+\infty}\]
    where $\widehat{P}_{\textnormal{score}}'$ is the average of empirical distributions for all \emph{observed} groups,
    \[\widehat{P}_{\textnormal{score}}' = \sum_{k:n_k>0} q'_k\cdot \widehat{P}^{(k)} \textnormal{ where }q'_k = q_k/q_+.\]
    We therefore have
    \[\textnormal{Quantile}_{1-\alpha}(\widehat{P}_{\textnormal{score}}) = \textnormal{Quantile}_{1-\alpha'}(\widehat{P}_{\textnormal{score}}')\]
    where 
    $1 - \alpha' = (1-\alpha)/q_+$.
    Therefore,
    \[\widehat{C}_n(x) = \left\{y\in\mathcal{Y}:s(x,y)\leq\textnormal{Quantile}_{1-\alpha'}(\widehat{P}_{\textnormal{score}}')\right\}.\]
    We then have
    \[\mathbb{P}_{Q'}\left\{Y_{n+1}\in\widehat{C}_n(X_{n+1})\right\} \geq 1-\alpha' -\max_{k:n_k>0}\{q_k/n_k\},\]
    which holds by applying our proof above (for the case where $n_k>0$ for all $k$) to the target distribution $Q'$ in place of $Q$. Finally, returning to the original distribution $q$, we therefore have
    \[\mathbb{P}_Q\left\{Y_{n+1}\in\widehat{C}_n(X_{n+1})\right\} \geq q_+ \cdot \left[1-\alpha' - \min_{k:n_k>0}\{q'_k/n_k\}\right] = 1 - \alpha - \max_{k:n_k>0}\{q_k/n_k\},\]
    which completes the proof.
\end{proof}

\subsection{Simulation with fixed group sizes}\label{sec:sim_fixed_nk}
The theoretical guarantee given in Theorem~\ref{thm:GWCP_fixed_nk} above for the fixed group size setting
suggests that the potential coverage loss of the GWCP method depends on the \emph{smallest} nonzero group size---specifically,
in the case where the test distribution is uniform over the $K$ groups, the loss of coverage
in the guarantee~\eqref{eqn:uniform_case_thm:GWCP_fixed_nk}
depends on $\min_{k:n_k>0} n_k$.
It is intuitive that if many groups are small, then we should expect to see a loss of coverage due to issues of low effective sample size,
but we might suspect that the guarantee for GWCP should depend on some kind of \emph{average} measure of group size, rather than 
the minimum. In fact, however, the lower bound in Theorem~\ref{thm:GWCP_fixed_nk} is essentially tight: in a worst case scenario,
the loss of coverage can indeed be governed by the smallest group size, even if most groups are large.

\begin{figure}[t]
    \centering
    \includegraphics[width = \textwidth]{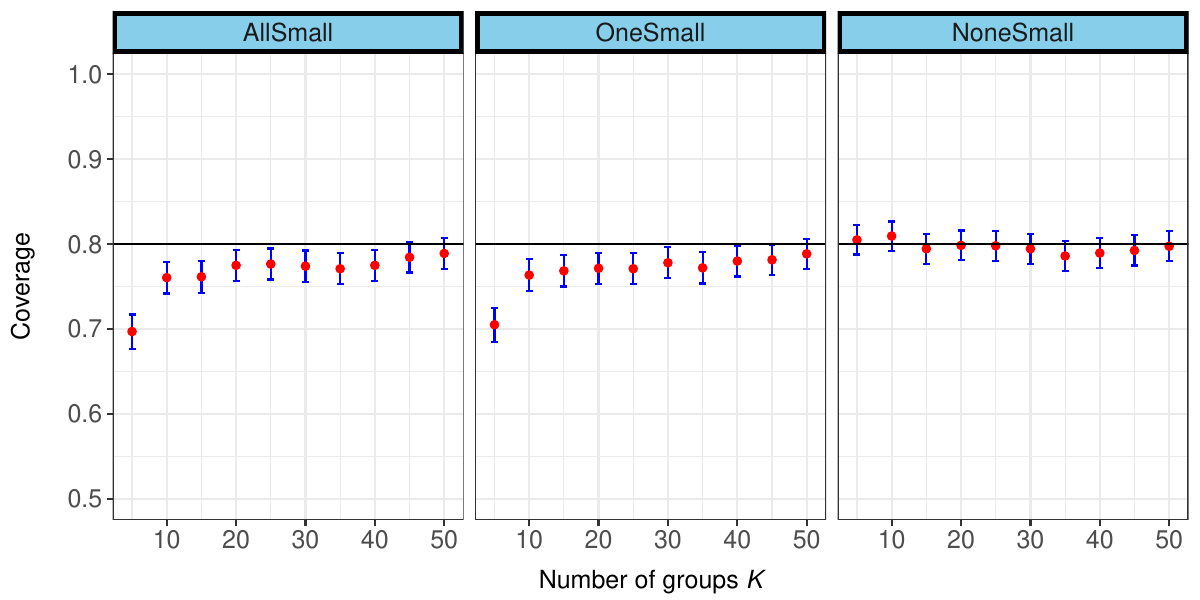}
    \caption{Simulation results for the setting of fixed group sizes, in three regimes for the group sizes $n_1,\dots,n_K$. 
    The horizontal line marks the target coverage level $1-\alpha = 80\%$. Results are averaged over $2000$ independent trials, and the figure displays the mean coverage rate along with a 95\% confidence interval. See Section~\ref{sec:sim_fixed_nk} for details.}
    \label{fig:sim_fixed_nk}
\end{figure}

To illustrate this, we simulate data with fixed group sizes, in three regimes:
\texttt{AllSmall}, where all the group sizes $n_k$ are small; \texttt{OneSmall}, where one of the $n_k$'s is small but the remaining $K-1$ 
values are large; and \texttt{NoneSmall}, where all $n_k$'s are large.
If we believe that coverage loss is controlled by average group size, then we would expect to see the results of the \texttt{OneSmall}
setting to resemble the \texttt{NoneSmall} setting. However, the simulation shows that the loss of coverage in the \texttt{OneSmall} setting
behaves very similarly to the \texttt{AllSmall} setting, confirming that the guarantee of Theorem~\ref{thm:GWCP_fixed_nk}
is tight. (In Appendix~\ref{app:tight_thm:GWCP_fixed_nk}, we will discuss the question of tightness of the lower bound in more theoretical
detail.)

The simulation is designed as follows. The target coverage level is $1-\alpha = 0.8$.
The simulation is repeated for each value $K\in\{5,10,15,\dots,50\}$ as the number of groups.
The distribution of $(X,Y)$ in the $k$th group is given by $\delta_k \times \textnormal{Uniform}[\frac{k-1}{K},\frac{k}{K}]$ (i.e., $X = X^0$ is simply
equal to the group label, $k$, and $Y$ is sampled from the $\textnormal{Uniform}[\frac{k-1}{K},\frac{k}{K}]$ distribution).
The pretrained score function is given by $s(x,y) = y$.
The group sizes are defined for each of the three regimes as:
\begin{itemize}
\item \texttt{AllSmall} regime: $n_k = 1$ for all $k$.
\item \texttt{OneSmall} regime: $n_{0.8K + 1} = 1$, and $n_k = 100$ for all $k\neq 0.8K+1$.
\item \texttt{NoneSmall} regime: $n_k = 100$ for all $k$.
\end{itemize}

The empirical coverage of the resulting prediction set is then computed based on $2000$ many independent trials where for each trial a calibration data set is drawn from the training data distribution $P$ and a test point is drawn from $Q$. Figure~\ref{fig:sim_fixed_nk} shows the mean coverage along with the $95\%$ confidence interval for each setting.

We observe that the empirical coverage levels for the  \texttt{OneSmall} regime behave very similarly to those
for the \texttt{AllSmall} regime, with substantial undercoverage when $K$ is small. 
That is, for the \texttt{OneSmall} regime, the minimum group size, $\min_k n_k =1$,
indeed seems to determine its behavior in terms of a loss of coverage in the finite sample regime. 
Of course, the theoretical guarantee ensures that coverage will converge to (at least) $1-\alpha$ as the number of groups $K$ increases,
even if $\min_k n_k=1$; our simulation results confirm this, since all three regimes show coverage at approximately the target level $1-\alpha$
once $K$ is large.

However, while this result confirms that our theoretical guarantee is tight in the worst case, the specific construction of the \texttt{OneSmall}
regime is designed specifically to be as challenging as possible. The specific design of the simulation
ensures that the smallest group (i.e., the $k$ for which $n_k=1$) is likely to have its score distribution concentrated near the
$(1-\alpha)$-quantile of the overall distribution of scores. This type of worst-case behavior may be unlikely to occur in practice;
while the $\min_k n_k$ term in the theoretical guarantee cannot be improved without further assumptions (see Appendix~\ref{app:tight_thm:GWCP_fixed_nk} for more details),
it is likely that for a typical problem, loss of coverage may depend more on average group size than on minimum group size.

\section{GWCP under covariate shift} \label{sec:gwcp_cov_shift}
We now return to the original problem that motivates this paper: the setting of a group-wise covariate shift.
Returning to our earlier problem formulation, we assume that the $n$ training points are sampled i.i.d.\ from $P$, and the target
test distribution is $Q$, where these distributions are defined as in~\eqref{eqn:define_P_hier} and~\eqref{eqn:define_Q_hier} above.
We will next present our main results for finite-sample coverage guarantees in the case of a group-wise covariate shift,
and then compare to existing coverage guarantees in the prior literature.
\subsection{Theoretical guarantee}
We now give the theoretical guarantee for the group-wise covariate shift setting.
Our theoretical guarantee for the group-wise covariate shift setting is
 a straightforward consequence of the guarantee for the fixed group size
setting: since Theorem~\ref{thm:GWCP_fixed_nk} ensures that coverage holds for \emph{any} group sizes
$n_k$, this means that coverage also holds \emph{on average} over a random draw of the group sizes $n_k$.

\begin{theorem}[Coverage under group-wise covariate shift.]\label{thm:GWCP_cov_shift}
Suppose the training data points $(X_1,Y_1),\dots,(X_n,Y_n)$ are sampled i.i.d.\ from the distribution $P$ as defined in~\eqref{eqn:define_P_hier},
while the test point $(X_{n+1},Y_{n+1})$ is drawn independently from the distribution $Q$ as defined in~\eqref{eqn:define_Q_hier}.
Then, for any fixed (i.e., pretrained) score function $s:\mathcal{X}\times\mathcal{Y}\rightarrow\mathbb{R}$, the GWCP prediction interval $\widehat{C}_n$ defined in~\eqref{eqn:GWCP} satisfies
    \[\mathbb{P}\left\{Y_{n+1}\in\widehat{C}_n(X_{n+1})\right\} \geq 1 - \alpha - \mathbb{E}\left[\max_{k:n_k>0} \{q_k/n_k\}\right].\]
 In particular, if 
 $\min_k p_k \geq \frac{8\log n}{n}$
 then
    \[\mathbb{P}\left\{Y_{n+1}\in\widehat{C}_n(X_{n+1})\right\} \geq 1 - \alpha - 4n^{-1}\max_k \{q_k/p_k\}.\]
   \end{theorem}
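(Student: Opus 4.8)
The first inequality is an immediate consequence of Theorem~\ref{thm:GWCP_fixed_nk}. The plan is to condition on the vector of group sizes $(n_1,\dots,n_K)$: under the i.i.d.\ model~\eqref{eqn:define_P_hier}, conditionally on $(n_1,\dots,n_K)$ the calibration points falling in group $k$ are i.i.d.\ from $\Pi_k$, which is exactly the fixed-group-size model~\eqref{eqn:distrib_fixed_nk}, while the test point is still drawn independently from $Q$. Theorem~\ref{thm:GWCP_fixed_nk} therefore gives conditional coverage at least $1-\alpha-\max_{k:n_k>0}\{q_k/n_k\}$, and taking expectation over $(n_1,\dots,n_K)$ yields the first displayed bound. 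It then remains to show that, under the assumption $\min_k p_k\ge 8\log n/n$, we have $\mathbb{E}\big[\max_{k:n_k>0}\{q_k/n_k\}\big]\le 4n^{-1}M$, where I write $M=\max_k\{q_k/p_k\}$.

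The main obstacle is that a naive control of the event ``some $n_k$ is atypically small'' costs a union bound over all $K$ groups, which is useless when $K$ is comparable to $n$. To avoid this I would split the maximum according to whether each group is well sampled. Since each $n_k\sim\mathrm{Binomial}(n,p_k)$, a multiplicative Chernoff bound gives $\mathbb{P}(n_k\le np_k/2)\le e^{-np_k/8}$, and the assumption $np_k\ge 8\log n$ turns this into $\mathbb{P}(n_k\le np_k/2)\le 1/n$ for every $k$. Writing $\{k:n_k>0\}$ as the disjoint union of $S_1=\{k:n_k\ge np_k/2\}$ and $S_2=\{k:0<n_k<np_k/2\}$ and using $\max(a,b)\le a+b$ for nonnegative terms,
\[\max_{k:n_k>0}\{q_k/n_k\}\;\le\;\max_{k\in S_1}\{q_k/n_k\}\;+\;\max_{k\in S_2}\{q_k/n_k\},\]
where on $S_1$ the first term is bounded deterministically by $q_k/n_k\le 2q_k/(np_k)\le 2M/n$.

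For the second term I would replace the maximum over $S_2$ by a sum and use $1/n_k\le 1$ whenever $n_k\ge 1$, giving $\max_{k\in S_2}\{q_k/n_k\}\le\sum_k q_k\,\mathbf{1}\{n_k< np_k/2\}$. Taking expectations and applying the Chernoff bound,
\[\mathbb{E}\Big[\max_{k\in S_2}\{q_k/n_k\}\Big]\le\sum_k q_k\,\mathbb{P}(n_k<np_k/2)\le\frac1n\sum_k q_k=\frac1n,\]
where the last step uses $\sum_k q_k=1$. Here the $q_k$-weighting is precisely what replaces the lossy factor of $K$ from a union bound by the harmless factor $\sum_k q_k=1$; this is the crux of the argument. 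Finally, since $\sum_k p_k(q_k/p_k)=\sum_k q_k=1$ exhibits $1$ as a $p$-weighted average of the ratios $q_k/p_k$, their maximum satisfies $M\ge 1$, so the second term is at most $M/n$. Combining the two pieces gives $\mathbb{E}\big[\max_{k:n_k>0}\{q_k/n_k\}\big]\le 2M/n+M/n=3M/n\le 4M/n$, which completes the proof with a little room to spare in the constant.
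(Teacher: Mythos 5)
Your proof is correct, and while the first half coincides with the paper's argument (condition on $(n_1,\dots,n_K)$, invoke Theorem~\ref{thm:GWCP_fixed_nk}, marginalize), your handling of the expectation bound is genuinely different from the paper's. The paper defines a single good event $\mathcal{E}=\{n_k\ge np_k(1-1/\sqrt{2})\ \textnormal{for all}\ k\}$, uses a Chernoff bound with the larger deviation $1-1/\sqrt{2}$ to get per-group failure probability $n^{-2}$, and then pays a union bound over all $K$ groups; this forces a second use of the hypothesis $\min_k p_k\ge 8\log n/n$, namely to control $K\le(\min_k p_k)^{-1}\le n/(8\log n)\le n/8$, together with the side remark that one may assume $n\ge 3$. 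On $\mathcal{E}^c$ the paper bounds the maximum crudely by $1\le\max_k\{q_k/p_k\}$, arriving at roughly $\bigl((1-1/\sqrt{2})^{-1}+1/8\bigr)/n\approx 3.54/n$ times $M$. Your argument instead splits the maximum by whether each group is well sampled and replaces the union bound with the $q$-weighted sum $\sum_k q_k\,\mathbf{1}\{n_k<np_k/2\}$, whose expectation is at most $\frac{1}{n}\sum_k q_k=\frac{1}{n}$ \emph{regardless of $K$}---as you note, this self-normalization is exactly where the lossy factor of $K$ disappears. Consequently you only need the milder per-group tail bound $e^{-np_k/8}\le 1/n$, you never need to relate $K$ to $n$ or special-case small $n$, and you land on the constant $3$ rather than the paper's $\approx 3.54$ (both under the stated $4$). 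The one point worth stating explicitly in a final write-up is the convention that a maximum over an empty set of nonnegative terms is $0$, so that the pointwise inequality $\max_{k:n_k>0}\{q_k/n_k\}\le\max_{k\in S_1}\{q_k/n_k\}+\max_{k\in S_2}\{q_k/n_k\}$ holds even when $S_1$ or $S_2$ is empty; with that convention, every step checks out.
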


 The proof is given in Appendix~\ref{app:proofs_GWCP_cov_shift}; essentially it is achieved by simply applying the guarantee
 for fixed $n_k$'s, given in Theorem~\ref{thm:GWCP_fixed_nk}, and then marginalizing over the random draw of the $n_k$'s.
The lower bound on $p_k$ of $O\left(\frac{\log n}{n}\right)$ is a mild condition that ensures that all groups are observed with high probability.

\subsection{Uniform groups and unobserved groups: a closer look}
In the results above, we have assumed that the group probabilities $q_1,\dots,q_K$ under the target distribution (i.e., the distribution
of the test data) are known. For example, if we know ahead of time that we will need to predict the response $Y$ for a particular
collection of $X$ values, $X_{n+1},\dots,X_{n+m}$, we can simply take the $q_k$'s to be the \emph{empirical} 
proportions of groups $1,\dots,K$ within this list, to guarantee coverage on average over this test set---that is, to
guarantee
\[\frac{1}{m}\sum_{i=1}^m \mathbb{P}\{Y_{n+i}\in\widehat{C}_n(X_{n+i})\}\geq 1-\alpha.\]

On the other hand, we can also imagine settings in which the set of test feature vectors is not available in advance.
In this case, perhaps we do not even know ahead of time the total number of groups in the population.
Suppose, say, we want to ensure coverage relative to a uniform distribution over all groups (i.e., $q_k \equiv 1/K$),
but we do not know the total number of groups $K$; instead, we simply run GWCP with the target distribution given 
by a uniform distribution over all \emph{observed} groups.
This leads to the following modification of the GWCP method~\eqref{eqn:GWCP}:
\begin{equation}\label{eqn:GWCP_unif_unobs}
\widehat{C}_n(x) = \left\{y\in\mathcal{Y}: s(x,y) \leq \widehat{q}\right\}\textnormal{ where }\widehat{q} = \textnormal{Quantile}_{1-\alpha}\left(\sum_{k: n_k \geq 1}  \frac{1}{\widehat{K}}\widehat{P}_{\textnormal{score}}^{(k)}\right) ,\ \widehat{K} = \sum_{k=1}^K\mathbf{1}_{n_k>0}.
\end{equation}

\begin{corollary}\label{cor:GWCP_cov_shift_unobs_grp}
Suppose the training data points $(X_1,Y_1),\dots,(X_n,Y_n)$ are sampled i.i.d.\ from the distribution $P$ as defined in~\eqref{eqn:define_P_hier},
while the test point $(X_{n+1},Y_{n+1})$ is drawn independently from the distribution $Q$ as defined in~\eqref{eqn:define_Q_hier} with $q_k \equiv 1/K$.
Then, for any fixed (i.e., pretrained) score function $s:\mathcal{X}\times\mathcal{Y}\rightarrow\mathbb{R}$, the prediction interval $\widehat{C}_n$ defined in~\eqref{eqn:GWCP_unif_unobs} satisfies
    \[\mathbb{P}\left\{Y_{n+1}\in\widehat{C}_n(X_{n+1})\right\} \geq (1-\alpha) - \mathbb{E}\left[ \frac{1}{K\min_{k:n_k>0} n_k} + (1-\alpha) \frac{\sum_k \mathbf{1}_{n_k=0}}{K}\right].\]
 In particular, if 
 $\min_k p_k \geq \frac{8\log n}{n}$
 then
    \[\mathbb{P}\left\{Y_{n+1}\in\widehat{C}_n(X_{n+1})\right\} \geq 1 - \alpha - \frac{4}{Kn\min_k p_k}.\]
\end{corollary}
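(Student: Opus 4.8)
The plan is to condition on the vector of group sizes $\mathbf{n} = (n_1,\dots,n_K)$, which determines both the set of observed groups and the quantity $\widehat{K}$, and then decompose according to whether the test point lands in an observed or an unobserved group. Conditional on $\mathbf{n}$, the within-group calibration samples are i.i.d.\ from the respective $\Pi_k$, so we are exactly in the fixed-group-size model of~\eqref{eqn:distrib_fixed_nk}. Under $Q$ with $q_k \equiv 1/K$, the test point lies in an observed group (some $k$ with $n_k>0$) with conditional probability $\widehat{K}/K$, and in an unobserved group with probability $(\sum_k \mathbf{1}_{n_k=0})/K = (K-\widehat{K})/K$. The key observation is that, conditional on the test point landing in an observed group, its conditional group distribution is exactly uniform over the observed groups, i.e.\ $q'_k = 1/\widehat{K}$ for each $k$ with $n_k>0$---and this is precisely the weighting used by the modified method in~\eqref{eqn:GWCP_unif_unobs}. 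Thus, conditional on $\mathbf{n}$, the observed-group contribution is a fixed-group-size GWCP problem with target weights $q'_k = 1/\widehat{K}$ (which indeed sum to $1$ over the observed groups), so that Theorem~\ref{thm:GWCP_fixed_nk} applies with no mismatch.

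Carrying this out, Theorem~\ref{thm:GWCP_fixed_nk} gives, on the observed-group event, a conditional coverage of at least $1-\alpha - \max_{k:n_k>0}\{q'_k/n_k\} = 1-\alpha - \frac{1}{\widehat{K}\min_{k:n_k>0} n_k}$. On the unobserved-group event I will simply lower bound coverage by $0$, since the method places no mass on unobserved groups and so nothing can be guaranteed there. Combining the two pieces, conditional on $\mathbf{n}$,
\[
\mathbb{P}\{Y_{n+1}\in\widehat{C}_n(X_{n+1})\mid \mathbf{n}\} \geq \frac{\widehat{K}}{K}\left(1-\alpha - \frac{1}{\widehat{K}\min_{k:n_k>0} n_k}\right).
\]
Substituting $\widehat{K}/K = 1 - (\sum_k \mathbf{1}_{n_k=0})/K$ and using the cancellation $\frac{\widehat{K}}{K}\cdot\frac{1}{\widehat{K}} = \frac{1}{K}$ yields the conditional bound $(1-\alpha) - (1-\alpha)\frac{\sum_k \mathbf{1}_{n_k=0}}{K} - \frac{1}{K\min_{k:n_k>0} n_k}$; taking expectation over $\mathbf{n}$ gives the first displayed inequality of the corollary.

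For the second (explicit) bound, the task reduces to controlling $\mathbb{E}\left[\frac{1}{K\min_{k:n_k>0} n_k} + (1-\alpha)\frac{\sum_k \mathbf{1}_{n_k=0}}{K}\right]$ under $\min_k p_k \geq \frac{8\log n}{n}$, which is the same type of Binomial concentration argument already used for Theorem~\ref{thm:GWCP_cov_shift}. Since $n_k\sim\mathrm{Binomial}(n,p_k)$, a Chernoff bound gives $\mathbb{P}(n_k \leq np_k/2)\leq e^{-np_k/8}\leq 1/n$ and $\mathbb{P}(n_k=0)\leq e^{-np_k}\leq n^{-8}$. On the high-probability event $E$ that every $n_k > np_k/2$, the first term is at most $\frac{2}{Kn\min_k p_k}$; off $E$ (probability $\leq K/n$ by a union bound) it is at most $1/K$, contributing at most $1/n$; and the unobserved-group term contributes at most $n^{-8}$ in expectation. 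Using $\min_k p_k \leq 1/K$ (so that $1/n \leq \frac{1}{Kn\min_k p_k}$) to absorb the two lower-order terms, the total is at most $\frac{4}{Kn\min_k p_k}$, giving the claimed bound.

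I expect the only genuinely substantive step to be the identification in the first paragraph---recognizing that conditioning on the test point's membership in an observed group produces a conditional target distribution that coincides \emph{exactly} with the method's $1/\widehat{K}$ weighting, which is what lets Theorem~\ref{thm:GWCP_fixed_nk} apply without any loss. Once that is in place, the decomposition into observed/unobserved contributions and the concentration bound are both routine, the latter being essentially a reuse of the argument underlying Theorem~\ref{thm:GWCP_cov_shift}.
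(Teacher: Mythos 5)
Your proposal is correct and follows essentially the same route as the paper's own proof: you condition on $(n_1,\dots,n_K)$, decompose $Q$ as a mixture placing weight $\widehat{K}/K$ on the uniform distribution over observed groups (which matches the method's $1/\widehat{K}$ weighting exactly, so Theorem~\ref{thm:GWCP_fixed_nk} applies conditionally), lower bound coverage by $0$ on unobserved groups, and take expectations. Your Chernoff/union-bound step for the explicit $\frac{4}{Kn\min_k p_k}$ rate differs only in the choice of constants (threshold $np_k/2$ with per-group failure probability $1/n$, versus the paper's $np_k(1-1/\sqrt{2})$ with failure probability $n^{-2}$), which is an immaterial variation.
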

The proof is given in Appendix~\ref{app:proofs_GWCP_cov_shift}.

\subsection{Unknown test group probabilities}
\label{sec:unknown_qk}
In this section, we consider the setting where the the group probabilities in the test distribution (i.e., $q_1,q_2,\ldots,q_K$) are not known. Instead we have $m$ many draws $\tilde{X}^0_1,\ldots,\tilde{X}^0_m\in \{1,\dots,K\}$ that offer an empirical approximation to this distribution over groups---for instance, we might have a sample of size $m$ from the test population, and the $\tilde{X}^0_i$'s are the group assignments of these $m$ data points. 

Can we simply use the empirical proportions in each group $k=1,\dots,K$ in place of the $q_k$'s when constructing the GWCP prediction set~\eqref{eqn:GWCP}? That is, our empirically weighted prediction set is given by
\begin{equation}\label{eqn:GWCP_unknown_q}
\widehat{C}_n(x) = \left\{y\in\mathcal{Y}: s(x,y) \leq \widehat{q}\right\}\textnormal{ where }\widehat{q} = \textnormal{Quantile}_{1-\alpha}\left(\sum_{k=1}^K \tilde{q}_k \widehat{P}_{\textnormal{score}}^{(k)}\right),\ \tilde{q}_k = \frac{\sum_{i=1}^m \mathbf{1}\{\tilde{X}^0_i = k\}}{m}.
\end{equation} 
The following result verifies that this approach offers nearly the same level of coverage as before---the additional loss of coverage in the guarantee is only $\frac{1}{m+1}$.
\begin{corollary}\label{cor:GWCP_cov_shift_unknown_q}
Suppose the training data points $(X_1,Y_1),\dots,(X_n,Y_n)$ are sampled i.i.d.\ from the distribution $P$ as defined in~\eqref{eqn:define_P_hier}, and the test point is independently drawn as $(X_{n+1},Y_{n+1})\sim Q$ as defined in~\eqref{eqn:define_Q_hier}. Moreover, suppose $\tilde{X}^0_1,\dots,\tilde{X}^0_m$ are i.i.d.\ draws from the Multinomial$(q_1,\dots,q_K)$ distribution, drawn independently of the data. Then, for any fixed (i.e., pretrained) score function $s:\mathcal{X}\times\mathcal{Y}\rightarrow\mathbb{R}$, the prediction interval $\widehat{C}_n$ defined in~\eqref{eqn:GWCP_unknown_q} satisfies
    \[\mathbb{P}\left\{Y_{n+m+1}\in\widehat{C}_n(X_{n+m+1})\right\} \geq 1-\alpha - \frac{1}{m+1} - \mathbb{E}\left[\max_{k:n_k>0} \{\tilde{q}_k/n_k\}\right].\]
\end{corollary}
The proof is given in Appendix~\ref{app:proofs_GWCP_cov_shift}.

\subsection{Comparison to existing guarantees}\label{sec:compare_lower_bds}
We next provide an empirical comparison between the theoretical guarantee obtained in Corollary~\ref{cor:GWCP_cov_shift_unobs_grp},
 which provides a lower bound for predictive coverage
under group-wise covariate shift, against the existing guarantee obtained by \citep{lei2021conformal}, shown above in~\eqref{eqn:L1_w_bound},
where the loss of coverage is bounded as $\mathbb{E}\left[\frac{\left|\widehat{w}(X) - w(X)\right|}{2}\right]$.
We use the same settings as in Section~\ref{sec:problem_setting_prior_bd}, with three regimes: $K=10$ (a constant number of groups),
 $K= \lfloor \sqrt{n}\rfloor$ (a slowly growing number of groups), and $K=n/10$ (a proportional number of groups). The training and test distributions
 are again uniform, $p_k = q_k \equiv 1/K$.
 Here the coverage guarantee given by \citep{lei2021conformal} is the same as that shown in Figure~\ref{fig:WCP_demo_L1_bound}, above;
our new result, the coverage guarantee of Corollary~\ref{cor:GWCP_cov_shift_unobs_grp}, is now added to the plot for comparison.
Note that this guarantee is for the version~\eqref{eqn:GWCP_unif_unobs} of the group-weighted conformal prediction interval, which does not assume
that we know the total number of groups.

\begin{figure}[t]
    \centering
    \includegraphics[width = \textwidth]{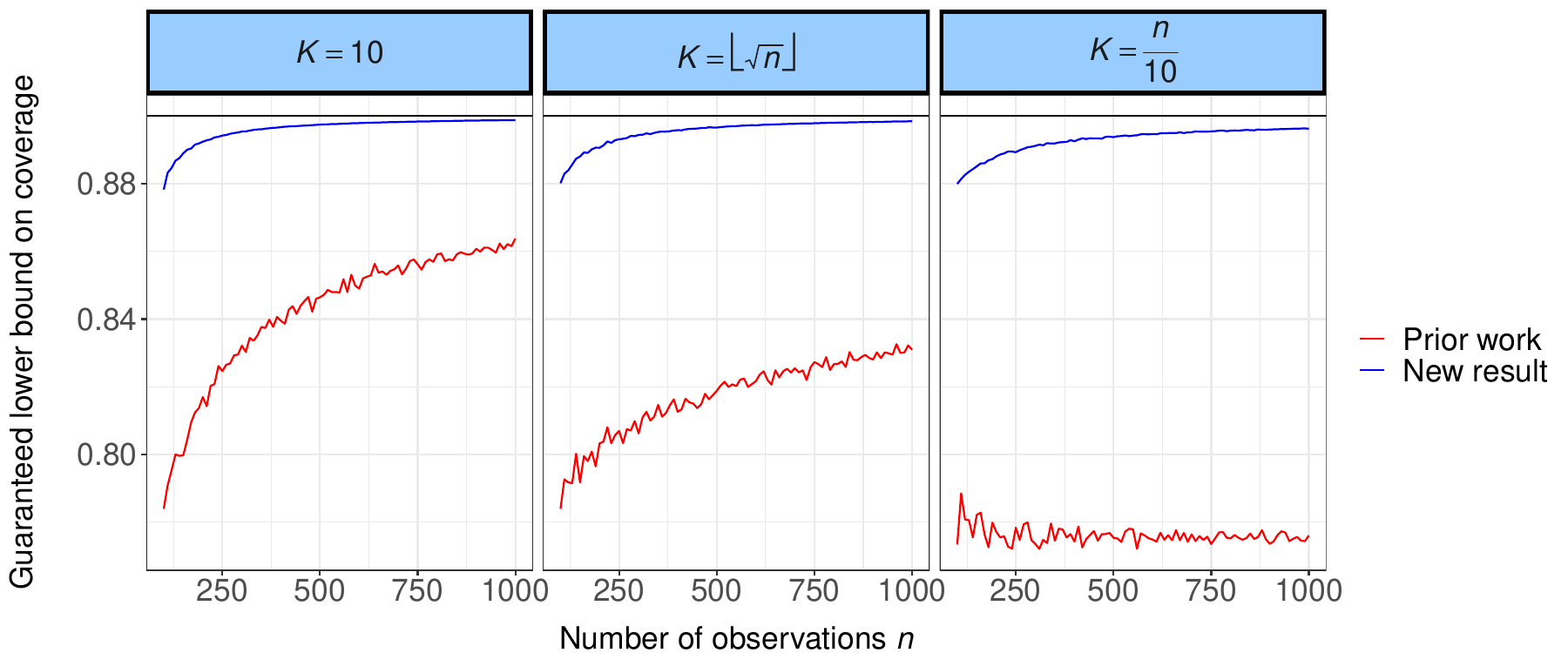}
        \caption{\citep{lei2021conformal}'s WCP coverage guarantee~\eqref{eqn:L1_w_bound}, compared to the new guarantee
        given in Corollary~\ref{cor:GWCP_cov_shift_unobs_grp}, for varying numbers of groups.
    The horizontal line marks the target coverage level 90\% (i.e., $\alpha = 0.1$). See Section~\ref{sec:compare_lower_bds} for details.}
    \label{fig:lower_bound_compare}
\end{figure}

As in Section~\ref{sec:problem_setting_prior_bd}, to ensure that $\widehat{w}$ is always well-defined even though there is a nonzero
probability that not all groups are observed, we define $\widehat{w}(x)$ using estimated proportions $\widehat{p}_k= \frac{n_k+1}{n+K}$,
i.e., augmenting each group count by 1. Therefore, we are giving two advantages to  \citep{lei2021conformal}'s existing bound
relative to ours: (1) the definition of $\widehat{p}_k$, which (since the true $p_k$'s are uniform) is more accurate than the actual empirical
distribution; (2) \citep{lei2021conformal}'s coverage guarantee applies to the more conservative form of the method, given in~\eqref{eqn:GWCP_plus},
relative to ours.

We can see in Figure~\ref{fig:lower_bound_compare} that the new coverage guarantee in Corollary~\ref{cor:GWCP_cov_shift_unobs_grp}
is substantially stronger, converging much more rapidly to the target coverage level of 90\% relative to the existing lower bound. Indeed,
as observed earlier, in the third setting, $K=n/10$, where the sample size within a group does not increase with $n$,
the existing lower bound does not approach the target coverage level $1-\alpha$ even as $n\rightarrow\infty$; in contrast,
we can observe that the lower bound in Corollary~\ref{cor:GWCP_cov_shift_unobs_grp} converges to $1-\alpha$ as long as \emph{either}
the number of groups $K$ or the (minimum) sample size within a group $n_k$ diverges to infinity.

Finally, we remark that another difference between our work and the existing literature is that, for our theoretical guarantee, it is important that the weight function $\widehat{w}$ is estimated using the calibration set---in contrast, in existing work, it is more common to assume $\widehat{w}$ is pretrained, e.g., fitted to the pretraining data. We discuss this distinction more in Appendix~\ref{app:wcp_which_data set}.

\section{Discussion}\label{sec:discussion}
This paper considers a supervised learning setting where the underlying population consists of various groups, with the relative proportions of the different groups being potentially different between the training data (i.e., the observed sample) and the test data (i.e., the target distribution or general population).
We view this as a covariate shift problem, where the distribution shift depends only on a finitely-valued covariate, i.e., the group to which an observation belongs. Thus, the prediction problem can be addressed with weighted conformal prediction, which offers distribution-free coverage guarantees as long as the covariate shift can be accurately estimated. Our results provide sharp coverage guarantees that bound the influence of this estimation error, 
and are substantially closer to the target coverage level $1-\alpha$ than previously known bounds; in particular, our results can even achieve
coverage converging to $1-\alpha$ when the group sizes remain constant as sample size $n$ increases (i.e., the number of groups is proportional to $n$).

More broadly, this problem can be viewed as an instance of a more general class of settings, where covariate shift can  be expressed
via a finite dimensional parametrization. Whether these techniques can be extended to broader scenarios, moving beyond
the setting of disjoint subpopulations, is an important direction for further exploration. Another future direction of interest is the problem of allowing for noisy or corrupted group labels in the test or training data, to handle scenarios where group labels may not be known exactly---for instance, due to issues such as privacy constraints or inaccurate data collection.

\section*{Acknowledgments}
R.F.B. was partially supported by the National Science Foundation via grant DMS-2023109, and by the Office of Naval Research via grant N00014-20-1-2337.

%Bibliography
\bibliographystyle{plainnat}  
\bibliography{ref}

\appendix

\section{Additional results and calculations}

\subsection{Additional proofs}\label{app:proofs_GWCP_cov_shift}

\begin{proof}[Proof of Theorem~\ref{thm:GWCP_cov_shift}]
When we sample training data i.i.d.\ from the distribution $P$, as defined in~\eqref{eqn:define_P_hier}, 
this is equivalent to first sampling group sizes $n_1,\dots,n_K$ via the $\textnormal{Multinomial}(p_1,\dots,p_K)$ distribution,
then drawing the appropriate number of samples from each $\Pi_k$ to define our training set.
In other words, conditional on $n_1,\dots,n_K$, 
the distribution of the training and test data is identical to that assumed in Theorem~\ref{thm:GWCP_fixed_nk}. 
Therefore, applying Theorem~\ref{thm:GWCP_fixed_nk}, we have
    \[\mathbb{P}\left\{Y_{n+1}\in\widehat{C}_n(X_{n+1}) \ \big\vert \ n_1,\dots,n_K \right\} \geq 1 - \alpha - \max_{k:n_k>0} \{q_k/n_k\}.\]
 Taking an expected value to marginalize over $n_1,\dots,n_K$ completes the proof of the first part.

To complete the proof we need to verify that $\mathbb{E}\left[\max_{k:n_k>0} \{q_k/n_k\}\right]\leq 4n^{-1}\max_k \{q_k/p_k\}$, under the assumption $\min_k p_k \geq \frac{8\log n}{n}$.
Let $\mathcal{E}$ be the event that $n_k \geq np_k (1-1/\sqrt{2})$ holds for all $k\in[K]$.
On the event $\mathcal{E}$, we have
\[\max_{k:n_k>0} \{q_k/n_k\} \leq \max_k\left\{\frac{q_k}{np_k (1-1/\sqrt{2})} \right\}= (1-1/\sqrt{2})^{-1} n^{-1} \max_k \{q_k/p_k\},\]
while on $\mathcal{E}^c$ we have
$\max_{k:n_k>0} \{q_k/n_k\} \leq \max_k q_k \leq 1 \leq \max_k \{q_k/p_k\}$. Therefore,
\[\mathbb{E}\left[\max_{k: n_k>0} \{q_k/n_k\}\right] 
\leq \max_k \{q_k/p_k\} \left[ (1-1/\sqrt{2})^{-1}n^{-1}+ \mathbb{P}\{\mathcal{E}^c\}\right].\]
Next, applying the multiplicative Chernoff bound to the random variable $n_k\sim\textnormal{Binomial}(n,p_k)$, for each $k$ we have
\[\mathbb{P}\left\{ n_k \leq  np_k(1-1/\sqrt{2})\right\} \leq  e^{-(1-(1-1/\sqrt{2}))^2 \cdot np_k / 2} = e^{-np_k/4} \leq e^{-8\log n/4} =n^{-2},\]
since $p_k \geq \frac{8\log n}{n}$ by assumption. Thus by the union bound,
$\mathbb{P}\{ \mathcal{E}^c\} \leq  Kn^{-2} \leq 1/(8n)$,
where for the last step, we use the bound $K\leq (\min_k p_k)^{-1} \leq \frac{n}{8\log n} \leq n/8$  (we can assume $n\geq 3$ since otherwise
the last bound in the theorem holds trivially). Combining these calculations proves the bound.
\end{proof}

\begin{proof}[Proof of Corollary~\ref{cor:GWCP_cov_shift_unobs_grp}]
First, condition on $n_1,\dots,n_K$, and let $Q'$ be the distribution placing uniform weights on all observed groups, i.e., 

\begin{equation}\label{eqn:define_Q_hier_obs}\begin{cases} X^0 \sim \textnormal{Uniform}(k: n_k \geq 1),\\
(X^1, Y) \mid (X^0 = k) \sim \Pi_k,\end{cases}\end{equation}
As in the proof of Theorem~\ref{thm:GWCP_cov_shift}, applying Theorem~\ref{thm:GWCP_fixed_nk} to a test point drawn from $Q'$, we have
    \[\mathbb{P}_{Q'}\left\{Y_{n+1}\in\widehat{C}_n(X_{n+1}) \ \big\vert \ n_1,\dots,n_K \right\} \geq 1 - \alpha - \max_{k:n_k>0} \{q_k/n_k\}
    =1 - \alpha - \frac{1}{\widehat{K}\min_{k:n_k>0}n_k}.\]
Next, continuing to condition on $n_1,\dots,n_K$, observe that $Q$ is equal to a mixture placing weight $\frac{\widehat{K}}{K}$ on $Q'$
and $1-\frac{\widehat{K}}{K}$ on the remaining groups.
Thus, for a test point $(X_{n+1},Y_{n+1})$ drawn from $Q$, we now have
\begin{multline*}\mathbb{P}\left\{Y_{n+1}\in\widehat{C}_n(X_{n+1}) \ \big\vert \ n_1,\dots,n_K \right\} \geq \frac{\widehat{K}}{K} \cdot \mathbb{P}_{Q'}\left\{Y_{n+1}\in\widehat{C}_n(X_{n+1}) \ \big\vert \ n_1,\dots,n_K \right\}\\ \geq 
    \frac{\widehat{K}}{K}\left(1 - \alpha - \frac{1}{\widehat{K}\min_{k:n_k>0}n_k}\right)
    = (1-\alpha) \cdot \frac{\widehat{K}}{K} - \frac{1}{K\min_{k:n_k>0} n_k}.\end{multline*}
 Taking an expected value to marginalize over $n_1,\dots,n_K$, then,
\begin{multline*}\mathbb{P}\left\{Y_{n+1}\in\widehat{C}_n(X_{n+1})\right\}
\geq \mathbb{E}\left[(1-\alpha) \cdot \frac{\widehat{K}}{K} - \frac{1}{K\min_{k:n_k>0} n_k}\right]\\ = (1-\alpha) - \mathbb{E}\left[ \frac{1}{K\min_{k:n_k>0} n_k} + (1-\alpha) \frac{\sum_k \mathbf{1}_{n_k=0}}{K}\right] .\end{multline*}

Next, we add the assumption that $\min_k p_k \geq \frac{8\log n}{n}$. 
As in the proof of Theorem~\ref{thm:GWCP_cov_shift}, let $\mathcal{E}$ be the event that
$n_k \geq np_k (1-1/\sqrt{2})$ holds for all $k\in[K]$, with $\mathbb{P}\{ \mathcal{E}^c\} \leq 1/(8n)$ as before.
On the event $\mathcal{E}^c$ we have
\[ \frac{1}{K\min_{k:n_k>0} n_k} + (1-\alpha) \frac{\sum_k \mathbf{1}_{n_k=0}}{K} \leq \frac{1}{K} + \frac{K-1}{K} = 1,\]
while on the event $\mathcal{E}$, we have $\sum_k \mathbf{1}_{n_k=0} = 0$, and $\frac{1}{K\min_{k:n_k>0} n_k}\leq \frac{1}{K\min_k np_k (1-1/\sqrt{2})} = \frac{1}{Kn \min_k p_k}\cdot\frac{1}{1-1/\sqrt{2}} $.
Therefore,
\[\mathbb{E}\left[ \frac{1}{K\min_{k:n_k>0} n_k} + (1-\alpha) \frac{\sum_k \mathbf{1}_{n_k=0}}{K}\right] \leq \frac{1}{Kn \min_k p_k}\cdot\frac{1}{1-1/\sqrt{2}} + \frac{1}{8n} \leq \frac{4}{Kn\min_k p_k},\]
where the last step holds since $K\min_k p_k \leq 1$. This completes the proof.
\end{proof}

\begin{proof}[Proof of Corollary~\ref{cor:GWCP_cov_shift_unknown_q}]
First we consider an equivalent representation of the problem. Suppose we have $m+1$ test points $(X_{n+1},Y_{n+1}),(X_{n+2},Y_{n+2}),\dots,(X_{n+m+1},Y_{n+m+1})$ sampled i.i.d.\ from $Q$; we will examine the coverage guarantee for the first test point $(X_{n+1},Y_{n+1})$ as before, while taking the remaining $m$ test points as our source of information for estimating the $q_k$'s---that is, we will define
\[\tilde{q}_k = \frac{\sum_{i=1}^m\mathbf{1}\{X^0_{n+1+i} = k\}}{m}.\]

Now let $\tilde{Q} = \sum_{k=1}^K \tilde{q}_k\cdot \widehat{P}_{\text{score}}^{(k)}$, and let $\tilde{Q}^* = \sum_{k=1}^K \tilde{q}^*_k\cdot \widehat{P}_{\text{score}}^{(k)}$ where 
\[\tilde{q}^*_k = \frac{\sum_{i=1}^{m+1} \mathbf{1}\{X^0_{n+i} = k\}}{m + 1}\]
is the empirical frequency of group $k$ in the entire test set (i.e., the test point of interest, $(X_{n+1},Y_{n+1})$, along with the $m$ test points used to construct the $\tilde{q}_k$'s). 
Note that by definition we have
\[\widehat{q} = \textnormal{Quantile}_{1-\alpha}(\tilde{Q}).\]
Furthermore,
\[\tilde{q}^*_k = \frac{m}{m+1} \tilde{q}_k + \frac{\mathbf{1}\{X^0_{n+1} = k\}}{m+1},\]
and in particular, this means that we can calculate
\[
    \textnormal{d}_{\text{TV}}\left(\tilde{Q},\tilde{Q}^*\right) = 
\frac{1}{2}\sum_{k=1}^K |\tilde{q}_k - \tilde{q}^*_k|
= 
\frac{1}{2}\sum_{k=1}^K \frac{1}{m+1}\left|\tilde{q}_k - \mathbf{1}\{X^0_{n+1}=k\}\right| \leq \frac{1}{m+1}.\]

Thus, $\widehat{q} = \textnormal{Quantile}_{1-\alpha}(\tilde{Q})  \geq \textnormal{Quantile}_{1-\alpha^\prime}(\tilde{Q}^*)$ where $\alpha^\prime = \alpha + \frac{1}{m+1}$, which further implies that

\begin{equation*}
    \widehat{C}_n^*(x) := \left\{y\in\mathcal{Y}: s(x,y) \leq \textnormal{Quantile}_{1-\alpha^\prime}(\tilde{Q}^*)\right\} \subseteq \widehat{C}_n(x)
\end{equation*}.
Therefore, 
$$\mathbb{P}\left\{Y_{n+1}\in\widehat{C}_n(X_{n+1})\right\} \geq \mathbb{P}\left\{Y_{n+1}\in\widehat{C}_n^*(X_{n+1})\right\} .$$
But by symmetry, since the $\tilde{q}_k$'s are constructed as a symmetric function of the $m+1$ test points, we have
\[\mathbb{P}\left\{Y_{n+1}\in\widehat{C}_n^*(X_{n+1})\right\} = \mathbb{P}\left\{Y_{n+i}\in\widehat{C}_n^*(X_{n+i})\right\}\]
for all $i=1,\dots,m+1$. Taking the average, then,
\[\mathbb{P}\left\{Y_{n+1}\in\widehat{C}_n^*(X_{n+1})\right\} = \frac{1}{m+1}\sum_{i=1}^{m+1}\mathbb{P}\left\{Y_{n+i}\in\widehat{C}_n^*(X_{n+i})\right\} = \mathbb{E}\left[ \frac{1}{m+1}\sum_{i=1}^{m+1}\mathbf{1}\left\{Y_{n+i}\in\widehat{C}_n^*(X_{n+i})\right\} \right].\]
By the tower law we can write this as
\[\mathbb{P}\left\{Y_{n+1}\in\widehat{C}_n^*(X_{n+1})\right\} =  \mathbb{E}\left[\mathbb{E}\left[ \frac{1}{m+1}\sum_{i=1}^{m+1}\mathbf{1}\left\{Y_{n+i}\in\widehat{C}_n^*(X_{n+i})\right\} \,\middle|\, X^0_{n+1},\dots,X^0_{n+m+1}\right]\right].\]
Equivalently, sampling an index $I$ uniformly from $\{n+1,\dots,n+m+1\}$,
\[\mathbb{P}\left\{Y_{n+1}\in\widehat{C}_n^*(X_{n+1})\right\} =  \mathbb{E}\left[\mathbb{P}\left\{Y_{n+I}\in\widehat{C}_n^*(X_{n+I}) \,\middle|\, X^0_{n+1},\dots,X^0_{n+m+1}\right\}\right].\]
But conditional on the group assignments
$X^0_{n+1},\dots,X^0_{n+m+1}$, we can observe that $(X_{n+I},Y_{n+I})$ is a draw from the distribution $\tilde{Q}^*$. Applying Theorem~\ref{thm:GWCP_cov_shift} (with $\tilde{Q}^*$ in place of $Q$), then,
\[\mathbb{P}\left\{Y_{n+I}\in\widehat{C}_n^*(X_{n+I}) \,\middle|\, X^0_{n+1},\dots,X^0_{n+m+1}\right\}\geq 1 - \alpha' - \max_{k:n_k>0} \{\tilde{q}^*_k/n_k\}\]
and so after marginalizing, 
\[\mathbb{P}\left\{Y_{n+1}\in\widehat{C}_n^*(X_{n+1})\right\} \geq 1-\alpha' - \mathbb{E}\left[\max_{k:n_k>0} \{\tilde{q}^*_k/n_k\} \right].\]
Finally, we have 
\[\mathbb{E}\left[(\tilde{q}_1,\dots,\tilde{q}_K)\mid (\tilde{q}^*_1,\dots,\tilde{q}^*_K),(n_1,\dots,n_K)\right] = (\tilde{q}^*_1,\dots,\tilde{q}^*_K),\]
so by Jensen's inequality, 
\[\mathbb{E}\left[\max_{k:n_k>0} \{\tilde{q}_k/n_k\} \,\middle|\, (\tilde{q}^*_1,\dots,\tilde{q}^*_K),(n_1,\dots,n_K)\right] \geq  \max_{k:n_k>0} \{\tilde{q}^*_k/n_k\}.\]
This completes the proof.
\end{proof}

\subsection{Is the coverage guarantee tight?}\label{app:tight_thm:GWCP_fixed_nk}
In this section, we examine whether the coverage guarantees obtained in this paper are tight, and whether less conservative versions of the 
method could be defined. 

We will first examine the question of unobserved groups. Note that the GWCP method defined in~\eqref{eqn:GWCP}
will compute a quantile $\widehat{q}=+\infty$, and consequently an infinite prediction interval $\widehat{C}(X_{n+1}) = \mathcal{Y}$,
if the proportion of unobserved groups is too large, i.e., if $\sum_{k:n_k=0} q_k >\alpha$.
The following proposition shows that this is unavoidable.
\begin{proposition}\label{prop:infinite_if_many_unobs} Let $\mathcal{Y}=\mathbb{R}$.
Let $\widehat{C}_n$ satisfy the distribution-free coverage guarantee
\[\mathbb{P}\{Y_{n+1}\in\widehat{C}_n(X_{n+1})\}\geq 1-\alpha\]
with respect to data $\{(X_i,Y_i)\}_{i=1,\dots,n}$ drawn as in~\eqref{eqn:distrib_fixed_nk} and test point $(X_{n+1},Y_{n+1})$
drawn as in~\eqref{eqn:define_Q_hier}, for \emph{any} distributions $\Pi_1,\dots,\Pi_K$.
If $\sum_{k:n_k=0}q_k>\alpha$ then
\[\mathbb{E}[\textnormal{Leb}(\widehat{C}_n(X_{n+1}))] = \infty.\]
\end{proposition}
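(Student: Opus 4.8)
The plan is to exploit the fact that the calibration data carry no information whatsoever about the within-group distributions $\Pi_k$ of the \emph{unobserved} groups, so that an adversary may spread out the response $Y$ in those groups without altering the prediction set. Write $S=\{k:n_k=0\}$ and $q_S=\sum_{k\in S}q_k$, so that by hypothesis $q_S>\alpha$. First I would fix the distributions $\Pi_j$ for all observed groups $j\notin S$ arbitrarily, and for each $k\in S$ fix only the $X^1$-marginal $\nu_k$ of $\Pi_k$, leaving the conditional law of $Y$ given $X^0=k$ to be chosen later. Because no calibration point lies in a group of $S$, the joint law of the random function $\widehat{C}_n$ together with $X_{n+1}$ depends only on these fixed quantities; in particular the conditional expected length
\[ g_k \;:=\; \mathbb{E}\!\left[\textnormal{Leb}\big(\widehat{C}_n(X_{n+1})\big)\;\middle|\;X^0_{n+1}=k\right] \]
is unaffected by the choice of the $Y$-marginal in group $k$, since $\widehat{C}_n(X_{n+1})$ is a function of the calibration data and $X_{n+1}$ alone, and not of $Y_{n+1}$.

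Next I would split into two cases. If $g_k=\infty$ for some $k\in S$ with $q_k>0$, then $\mathbb{E}[\textnormal{Leb}(\widehat{C}_n(X_{n+1}))]\ge q_k\,g_k=\infty$ and we are done. Otherwise $g_k<\infty$ for every $k\in S$, and I would derive a contradiction with validity. Take the conditional law of $Y$ given $X^0=k$ to be $\mu^{(M)}=\textnormal{Uniform}[-M,M]$ for each $k\in S$. Conditional on $X^0_{n+1}=k$, the response $Y_{n+1}\sim\mu^{(M)}$ is independent of the set $\widehat{C}_n(X_{n+1})$, so the conditional coverage equals $\mathbb{E}[\,\mu^{(M)}(\widehat{C}_n(X_{n+1}))\mid X^0_{n+1}=k\,]$; bounding $\mu^{(M)}(C)\le \textnormal{Leb}(C)/(2M)$ for every measurable $C$ gives conditional coverage at most $g_k/(2M)$. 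Summing over groups, the coverage contributed by $S$ is at most $\tfrac{1}{2M}\sum_{k\in S}q_k\,g_k$, which tends to $0$ as $M\to\infty$.

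Finally I would combine this with a lower bound forced by the assumed validity. Decomposing coverage over groups as in the proof of Theorem~\ref{thm:GWCP_fixed_nk}, the observed groups contribute at most $\sum_{j\notin S}q_j=1-q_S$, so the distribution-free guarantee $\mathbb{P}\{Y_{n+1}\in\widehat{C}_n(X_{n+1})\}\ge 1-\alpha$ forces the groups in $S$ to contribute at least $(1-\alpha)-(1-q_S)=q_S-\alpha>0$. For $M$ large enough this strictly positive lower bound contradicts the vanishing upper bound from the previous step, ruling out the second case. Hence some $g_k$ must be infinite, which yields $\mathbb{E}[\textnormal{Leb}(\widehat{C}_n(X_{n+1}))]=\infty$. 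The crux—and the step I would be most careful with—is the decoupling observation that the law of $\widehat{C}_n$ does not depend on the unobserved groups' response distributions; this is exactly what lets the adversary spread $\mu^{(M)}$ over $[-M,M]$ without affecting the prediction set, forcing length and coverage into tension. Everything else (the two-case structure, which handles an already-infinite $g_k$, and the elementary inequality $\mu^{(M)}(C)\le \textnormal{Leb}(C)/(2M)$) is routine.
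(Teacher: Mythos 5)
Your proof is correct, and it reaches the conclusion by a genuinely different route than the paper, even though both rest on the same decoupling insight (since no calibration point falls in an unobserved group, the joint law of the calibration data and $X_{n+1}$---and hence of $\widehat{C}_n(X_{n+1})$---is unchanged when the adversary alters the response distributions of those groups). The paper's proof uses a family of \emph{point-mass} adversaries: for each fixed $y\in\mathbb{R}$ it sets $Y\equiv y$ in every unobserved group, applies the coverage guarantee to that modified distribution, and deduces the pointwise bound $\mathbb{P}\{y\in\widehat{C}_n(X_{n+1})\}\geq \sum_{k:n_k=0}q_k-\alpha>0$ uniformly in $y$, after which Fubini--Tonelli integration over $y$ gives infinite expected length directly, with no case analysis. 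You instead use a single one-parameter family of \emph{spreading} adversaries, $\textnormal{Unif}[-M,M]$, and play coverage against length: finiteness of the conditional expected lengths $g_k$ would force the unobserved groups' coverage contribution below $q_S-\alpha$ once $M$ is large, contradicting validity. Each approach buys something: the paper's argument yields the stronger intermediate statement that \emph{every} real value lies in the prediction set with probability bounded away from zero, and avoids proof by contradiction; yours could in fact be made direct as well (the guarantee applied at scale $M$ gives $\sum_{k\in S}q_k g_k\geq 2M(q_S-\alpha)$ for every $M$, hence some $g_k=\infty$), though as written it needs the two-case structure to handle an already-infinite $g_k$. One small point to make explicit: in your construction you should take $\Pi_k=\nu_k\times\textnormal{Unif}[-M,M]$ as a product measure for $k\in S$, so that $Y_{n+1}$ is independent of $X^1_{n+1}$ (and hence of $\widehat{C}_n(X_{n+1})$) conditional on $X^0_{n+1}=k$; this independence is what licenses the step $\mathbb{P}\{Y_{n+1}\in\widehat{C}_n(X_{n+1})\mid X^0_{n+1}=k\}=\mathbb{E}[\mu^{(M)}(\widehat{C}_n(X_{n+1}))\mid X^0_{n+1}=k]$, and it is only implicit in your write-up.
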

\begin{proof}[Proof of Proposition~\ref{prop:infinite_if_many_unobs}]
Fix any $y\in\mathbb{R}$. We begin by defining an alternative distribution of the data: for each group $k$ let
\[\tilde{\Pi}_k = \begin{cases}\Pi_k, & n_k >0, \\
(\Pi_k)_{X^1}\times \delta_y, & n_k=0,\end{cases}\]
where $(\Pi_k)_{X^1}$ denotes the marginal of $X^1$ for $(X^1,Y)\sim\Pi_k$.
That is, if $n_k=0$, then the value $X^1$ is drawn exactly as before, while $Y$ is set to be equal to the constant value $y$.

Now let $(\tilde{X}_i,\tilde{Y}_i)$ denote data drawn with these new distributions $\tilde{\Pi}_k$ in place of the original $\Pi_k$'s,
and let $\tilde{C}_n$ denote the prediction interval when constructed on data drawn from this new distribution. Note that 
the training data and test feature,
$(\tilde{X}_1,\tilde{Y}_1),\dots,(\tilde{X}_n,\tilde{Y}_n),\tilde{X}_{n+1}$,
has an identical joint distribution as the original data,
$(X_1,Y_1),\dots,(X_n,Y_n),X_{n+1}$,
by construction, and therefore, 
\[\tilde{C}_n(\tilde{X}_{n+1})\stackrel{\textnormal{d}}{=}\widehat{C}_n(X_{n+1}),\]
where $\stackrel{\textnormal{d}}{=}$ denotes equality in distribution. Therefore, 
\[\mathbb{P}\{y\in \widehat{C}_n(X_{n+1})\} = 
\mathbb{P}\{y\in \tilde{C}_n(\tilde{X}_{n+1})\}.\]
Moreover, if $\tilde{X}_{n+1}^0 = k$ for some $k$ with $n_k=0$, then $\tilde{Y}_{n+1}=y$ almost surely, by definition of $\tilde{\Pi}_k$. Therefore,
\begin{multline*}\mathbb{P}\{y\in \tilde{C}_n(\tilde{X}_{n+1})\}
\geq \mathbb{P}\{\tilde{Y}_{n+1} \in \tilde{C}_n(\tilde{X}_{n+1}), \tilde{Y}_{n+1}=y\}\\
\geq \mathbb{P}\{\tilde{Y}_{n+1} \in \tilde{C}_n(\tilde{X}_{n+1}), \tilde{X}_{n+1}^0=k\textnormal{ for some $k$ with $n_k=0$}\}\\
\geq\mathbb{P}\{\tilde{X}_{n+1}^0=k\textnormal{ for some $k$ with $n_k=0$}\} -  \mathbb{P}\{\tilde{Y}_{n+1} \not\in \tilde{C}_n(\tilde{X}_{n+1})\} .\end{multline*}
And, $ \mathbb{P}\{\tilde{Y}_{n+1} \not\in \tilde{C}_n(\tilde{X}_{n+1})\}\leq \alpha$ (since the prediction interval satisfies
the distribution-free coverage guarantee---and in particular, must have coverage at least $1-\alpha$ with respect to this newly defined
distribution), while $ \mathbb{P}\{\tilde{X}_{n+1}^0=k\textnormal{ for some $k$ with $n_k=0$}\} = \sum_{k: n_k=0} q_k$. So, we have
\[\mathbb{P}\{y\in \tilde{C}_n(\tilde{X}_{n+1})\} \geq \sum_{k: n_k=0} q_k - \alpha >0.\]

Finally, the claim that $\mathbb{E}[\textnormal{Leb}(\widehat{C}_n(X_{n+1}))] = \infty$
follows directly from integration. For any $k$ with $n_k=0$,
\[\mathbb{E}[\textnormal{Leb}(\widehat{C}_n(X_{n+1}))]  \geq q_k \cdot \mathbb{E}[\textnormal{Leb}(\widehat{C}_n(X_{n+1})) \mid X_{n+1}^0=k]
 =  q_k \cdot \mathbb{E}\left[\mathbb{E}[\textnormal{Leb}(\widehat{C}_n(X_{n+1})) \mid X_{n+1}^0]\mid X_{n+1}^0=k\right],\]
and so it suffices to show that $\mathbb{E}[\textnormal{Leb}(\widehat{C}_n(x)) ]=\infty$ for any $x$ with $x^0=k$. We have
\begin{multline*}
\mathbb{E}[\textnormal{Leb}(\widehat{C}_n(x)) ]
=\mathbb{E}\left[\int_{y\in\mathbb{R}}\mathbf{1}_{y\in\widehat{C}_n(x)}\;\mathsf{d}y\right]
=\int_{y\in\mathbb{R}}\mathbb{E}\left[\mathbf{1}_{y\in\widehat{C}_n(x)}\right]\;\mathsf{d}y\\
=\int_{y\in\mathbb{R}}\mathbb{P}\{y\in\widehat{C}_n(x)\}\;\mathsf{d}y
\geq \int_{y\in\mathbb{R}} (\sum_{k: n_k=0} q_k - \alpha)\;\mathsf{d}y = \infty,
\end{multline*}
where the second step holds by the Fubini--Tonelli theorem, and the last step holds since $\sum_{k: n_k=0} q_k - \alpha>0$ by assumption.
\end{proof}

Next we consider the term $\max_k \{q_k/n_k\}$ in our coverage guarantee, Theorem~\ref{thm:GWCP_fixed_nk}, for the case of fixed group sizes $n_k$.
Can this error term be improved? The following counterexample establishes that this term is actually essentially optimal.
\begin{proposition}\label{prop:example_tight}
Let $\mathcal{Y}=\mathbb{R}$. Let $(1-\alpha)K$ be an integer, let $q_k \equiv 1/K$, and fix any $n_1,\dots,n_K\geq 1$. Without loss of generality assume $n_1 = \min_k n_k$.
Define a score function $s(x,y) =y$, and let $\Pi_k$ have marginal distribution $(\Pi_k)_Y$ on the response $Y$ given by
\[Y\sim \begin{cases}\textnormal{Unif}[0,1], & k=1,\\
\textnormal{Unif}[-1,0], & k=2,\dots,(1-\alpha)K,\\
\textnormal{Unif}[1,2],& k=(1-\alpha)K+1,\dots,K.\end{cases}\]
Let $q_k\equiv 1/K$.
Then the GWCP prediction interval~\eqref{eqn:GWCP} has coverage
\[\mathbb{P}\{Y_{n+1}\in\widehat{C}(X_{n+1})\} = 1-\alpha - \frac{1}{K(\min_k n_k+1)}.\]
\end{proposition}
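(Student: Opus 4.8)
The plan is to compute the coverage probability directly by exploiting the very simple structure of the score distributions. The key observation is that the scores are deterministically segregated by group: data points in groups $2,\dots,(1-\alpha)K$ always have negative scores, groups $(1-\alpha)K+1,\dots,K$ always have scores in $[1,2]$, and only group $1$ has scores in $[0,1]$. Since $q_k\equiv 1/K$, the mixture $\widehat{P}_{\textnormal{score}} = \sum_k q_k\widehat{P}^{(k)}_{\textnormal{score}}$ places total mass $\frac{(1-\alpha)K-1}{K}$ on negative values, mass $\frac{1}{K}$ (from group $1$) on $[0,1]$, and mass $\frac{\alpha K}{K}=\alpha$ on $[1,2]$. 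Therefore the $(1-\alpha)$-quantile threshold $\widehat{q}$ falls inside the $[0,1]$ range contributed by group $1$, and the precise location of $\widehat{q}$ depends only on the empirical scores of the $n_1$ points in group $1$.

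First I would make this precise: the cumulative weight just below the group-$1$ scores is $\frac{(1-\alpha)K-1}{K} = 1-\alpha - \frac{1}{K}$, so to reach the $(1-\alpha)$-quantile we need an additional mass of $\frac{1}{K}$ coming from group $1$'s points. Since group $1$ contributes mass $\frac{1}{K n_1}$ to each of its $n_1$ order statistics, reaching cumulative mass $1-\alpha$ requires crossing exactly $n_1$ of group $1$'s points out of $n_1$ --- more carefully, I would track the quantile definition used in the paper (the smallest value whose cumulative weight is at least $1-\alpha$) and identify $\widehat{q}$ with a specific order statistic of the group-$1$ scores. The upshot is that $\widehat{q}$ equals the largest group-$1$ score, $\max_{i} s(X_{1,i},Y_{1,i})$.

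Next I would compute the coverage. The test point lands in group $k$ with probability $1/K$. If the test point is in groups $2,\dots,(1-\alpha)K$, its score is negative, hence $\le\widehat{q}\in[0,1]$, so coverage holds with probability $1$; these contribute $\frac{(1-\alpha)K-1}{K}$. If the test point is in groups $(1-\alpha)K+1,\dots,K$, its score lies in $[1,2]>\widehat{q}$, so coverage fails; these contribute $0$. The only nontrivial case is a test point in group $1$: here both $Y_{1,0}$ (the test score) and the $n_1$ calibration scores are i.i.d.\ $\textnormal{Unif}[0,1]$, and coverage holds iff $Y_{1,0}\le\widehat{q}=\max_i Y_{1,i}$, i.e.\ iff the test score is not the strict maximum among the $n_1+1$ i.i.d.\ values. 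By exchangeability this probability is $\frac{n_1}{n_1+1}$. Combining, the total coverage is
\[
\frac{(1-\alpha)K-1}{K} + \frac{1}{K}\cdot\frac{n_1}{n_1+1} = (1-\alpha) - \frac{1}{K} + \frac{1}{K}\cdot\frac{n_1}{n_1+1} = (1-\alpha) - \frac{1}{K(n_1+1)},
\]
which is the claimed value since $n_1=\min_k n_k$.

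I expect the main obstacle to be pinning down the exact location of $\widehat{q}$ under the specific quantile convention in~\eqref{eqn:GWCP}, and in particular verifying carefully that $\widehat{q}$ is exactly the maximum group-$1$ score rather than an adjacent order statistic --- off-by-one errors in the cumulative-mass bookkeeping are the real risk. The assumption that $(1-\alpha)K$ is an integer is precisely what makes the cumulative mass align cleanly at the group boundaries, so I would lean on that to avoid edge effects; everything else is routine exchangeability and direct summation.
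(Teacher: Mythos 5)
Your proposal is correct and follows essentially the same route as the paper's proof: identify $\widehat{q}$ as the maximum group-$1$ score via the cumulative-mass bookkeeping (which your argument pins down correctly, needing all $n_1$ group-$1$ points to reach level $1-\alpha$), then decompose coverage over the test point's group, with only group $1$ contributing nontrivially. The single cosmetic difference is that you compute the group-$1$ coverage by an exchangeability/rank argument ($Y_{1,0}$ is the strict maximum of $n_1+1$ i.i.d.\ uniforms with probability $\frac{1}{n_1+1}$), whereas the paper uses $\mathbb{P}\{Y_{n+1}\leq \widehat{q}\mid X_{n+1}^0=1\}=\mathbb{E}[\widehat{q}]=\frac{n_1}{n_1+1}$ for the maximum of $n_1$ uniforms --- both yield the same value.
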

Since $q_k\equiv 1/K$ in this example, the excess loss of coverage is therefore equal to $\frac{1}{K(\min_k n_k+1)} = \max_k \{q_k/(n_k+1)\}$, which (up to at most a factor of 2) matches the error term  $\max_{k:n_k>0} \{q_k/n_k\}$,
appearing in Theorem~\ref{thm:GWCP_fixed_nk}. Therefore, this establishes that the guarantee of this theorem is fairly tight.

\begin{proof}[Proof of Proposition~\ref{prop:example_tight}]
Since $q_k\equiv 1/K$, the weighted empirical distribution $\widehat{P}_{\textnormal{score}}$ is simply the average of the $\widehat{P}^{(k)}$'s.
Since we have scores lying in $[-1,0]$ for $(1-\alpha)K-1$ many groups, in $[0,1]$ for one group (namely, group $k=1$), and in $[1,2]$ for $\alpha K$ many groups, by definition of the quantile we can see that
$\widehat{q}$ is equal to the maximum score from group $k=1$---that is, $\widehat{q}$ is the maximum of $n_1$ many draws from the $\textnormal{Unif}[0,1]$ distribution. In particular, this implies that $\mathbb{E}[\widehat{q}] = \frac{n_1}{n_1+1}$ by properties of the uniform distribution.

Next, by construction of the score function, we have $\widehat{C}(X_{n+1}) = (-\infty,\widehat{q}]$, and so
\[\mathbb{P}\{Y_{n+1}\in\widehat{C}(X_{n+1})\} = \mathbb{P}\{Y_{n+1}\leq \widehat{q}\} = \sum_{k=1}^K q_k \mathbb{P}\{Y_{n+1}\leq \widehat{q}\mid X_{n+1}^0 = k\}.\]
Recall that $\widehat{q}$ must lie in $[0,1]$ by construction. Therefore,
if 
the test group is $X_{n+1}^0=k$ for some $k=2,\dots,(1-\alpha)K$, we have $Y_{n+1}\leq 0\leq \widehat{q}$ almost surely,
and if the test group is $X_{n+1}^0=k$ for some $k>(1-\alpha)K$, we have $Y_{n+1}>1\geq \widehat{q}$ almost surely.
On the other hand if $X_{n+1}^0=1$ then $Y_{n+1}$ is drawn from the $\textnormal{Unif}[0,1]$ distribution, and so 
\[\mathbb{P}\{Y_{n+1}\leq\widehat{q}\mid X_{n+1}^0=1\} = \mathbb{E}\left[\mathbb{P}\{Y_{n+1}\leq \widehat{q}\mid \widehat{q};X_{n+1}^0 =1\}\mid X_{n+1}^0 =1\right] = \mathbb{E}[\widehat{q}] = \frac{n_1}{n_1+1}.\]
We thus have
\[\mathbb{P}\{Y_{n+1}\in\widehat{C}(X_{n+1})\} = \frac{(1-\alpha)K - 1}{K} + \frac{1}{K} \cdot \frac{n_1}{n_1+1} = 1-\alpha - \frac{1}{K(n_1+1)}.\]
\end{proof}

\subsection{Which data should be used to estimate the weights?}\label{app:wcp_which_data set}
Recall that in Section~\ref{sec:compare_lower_bds}, we showed an empirical comparison between our new bounds and the coverage guarantees established 
by  \citep{lei2021conformal}'s existing result~\eqref{eqn:L1_w_bound}.
In addition to the more rapid convergence of the lower bound in our new result, there is another interesting
difference as compared to existing work. Specifically, existing theory (including \citep{lei2021conformal}'s result, but also related works such as \citep{candes2023conformalized, gui2022conformalized,yin2022conformal})
applies to an estimated weight function  $\widehat{w}(x)$ that was defined \emph{independently} of the calibration set.
For example, if a pretraining data set was used for defining the score function (e.g., fitting a model $\widehat{f}:\mathcal{X}\rightarrow\mathbb{R}$,
so that the score function is given by the residual score, $s(x,y) = |y - \widehat{f}(x)|$),
the weight function $\widehat{w}(x) $ might also be trained ahead of time by estimating the covariate distribution $P_X$ using this pretraining
data set. In contrast, the weights used in our proposed construction for the  GWCP method specifically use the \emph{calibration} set's proportions---recall from Section~\ref{sec:define_GWCP} that we can think of GWCP as essentially running WCP with weight function $\widehat{w}(X) = q_k / (n_k/n)$ for $X^0=k$, where $n_k$
 is the number of \emph{calibration} data points in group $k$.

For both types of results, the particular choice is essential to the theory:  \citep{lei2021conformal}'s result relies on $\widehat{w}$ being independent
of the calibration set, while in contrast, our proofs rely on using the $n_k$'s observed in the calibration set. Which approach is the ``right'' one for
achieving the best performance---that is, for the least loss of coverage?

To study this, we perform a simulation that compares the different options. We consider prediction intervals of the form
\begin{equation}\label{eqn:WCP_w_hat_k}
\widehat{q} = \textnormal{Quantile}_{1-\alpha}\left(\sum_{i=1}^n \frac{\widehat{w}(X_i)}{\sum_{i'=1}^{n}\widehat{w}(X_{i'})}\cdot \delta_{s_i}\right)\textnormal{ where }\widehat{w}(X_i) = \widehat{w}_k\textnormal{ for }X_i^0 =k,
\end{equation}
i.e., the weight $\widehat{w}(X_i)$ placed on data point $i$ depends only on the group $k$ to which this data point belongs. (As discussed above, in Section~\ref{sec:define_GWCP}, we can view this as a slightly less conservative version of the WCP method.) We will test three different options for defining the weights:
\begin{itemize}
    \item Estimate weights using the pretraining data set, $\{(X^*_i,Y^*_i)\}_{i\in[n_*]}$:\footnote{Since the estimated weight function $\widehat{w}$ is not well-defined on the rare event that any group has zero observations in the pretraining set, we discard any trials for which this occurs.}
    \[\widehat{w}_k = q_k/\widehat{p}_k \textnormal{ where }\widehat{p}_k = \frac{\sum_{i=1}^{n_*}\mathbf{1}_{X_i^*{}^0=k}}{n_*}.\]
    \item Estimate weights using the calibration data set, $\{(X_i,Y_i)\}_{i\in[n]}$:
    \[\widehat{w}_k = q_k/\widehat{p}_k \textnormal{ where }\widehat{p}_k = \frac{\sum_{i=1}^n\mathbf{1}_{X_i^0=k}}{n}.\]
    \item Use the oracle weights,
    \[\widehat{w}_k = q_k/p_k,\]
    where $p_k$ is the true probability of group $k$ under the training distribution $P$.
\end{itemize}
The distribution of the data is defined as follows: we have $K=5$ groups, with $q_k \equiv 0.2$. We set $n=100$ and $n_*=100$. The data distribution is given by $(X,Y)\in\{1,2,3,4,5\}\times \mathbb{R}$, where $Y\mid X=k \sim \mathcal{N}(\theta_k,1)$. 
The score function is fixed as $s(x,y)=|y|$.

We consider three settings for the group probabilities $p_k$ in the training distribution $P$, and the means $\theta_k$:
\begin{itemize}
    \item Setting 1: the groups are equally represented in the training data, with $(p_1,p_2,p_3,p_4,p_5) = (0.2,0.2,0.2,0.2,0.2)$, and $(\theta_1,\theta_2,\theta_3,\theta_4,\theta_5) = (20,15,10,5,0)$. 
    \item Setting 2: the group distributions are nonuniform, with $(p_1,p_2,p_3,p_4,p_5) = (0.4,0.25,0.2,0.1,0.05)$, and $(\theta_1,\theta_2,\theta_3,\theta_4,\theta_5) = (20,15,10,5,0)$. Note that in this setting, the more common groups tend to have higher values of the score.
    \item Setting 3: the group distributions are nonuniform, with $(p_1,p_2,p_3,p_4,p_5) = (0.4,0.25,0.2,0.1,0.05)$, $(\theta_1,\theta_2,\theta_3,\theta_4,\theta_5) = (0,5,10,15,20)$. Note that in this setting, the more common groups tend to have lower values of the score.
\end{itemize}

The results are displayed in Figure~\ref{fig:wcp_which_data set}.
We can see that in each setting, using the pretraining data for estimating the weights tends to result in undercoverage, while using the calibration data does not. (The slight undercoverage of the oracle method, in Setting 3, is due to the fact that we are not adding the correction term to our weighted conformal interval---that is, the difference between~\eqref{eqn:GWCP_alt_notation} and~\eqref{eqn:GWCP_plus}.) 

\begin{figure}[t]
     \centering
         \includegraphics[width = \textwidth]{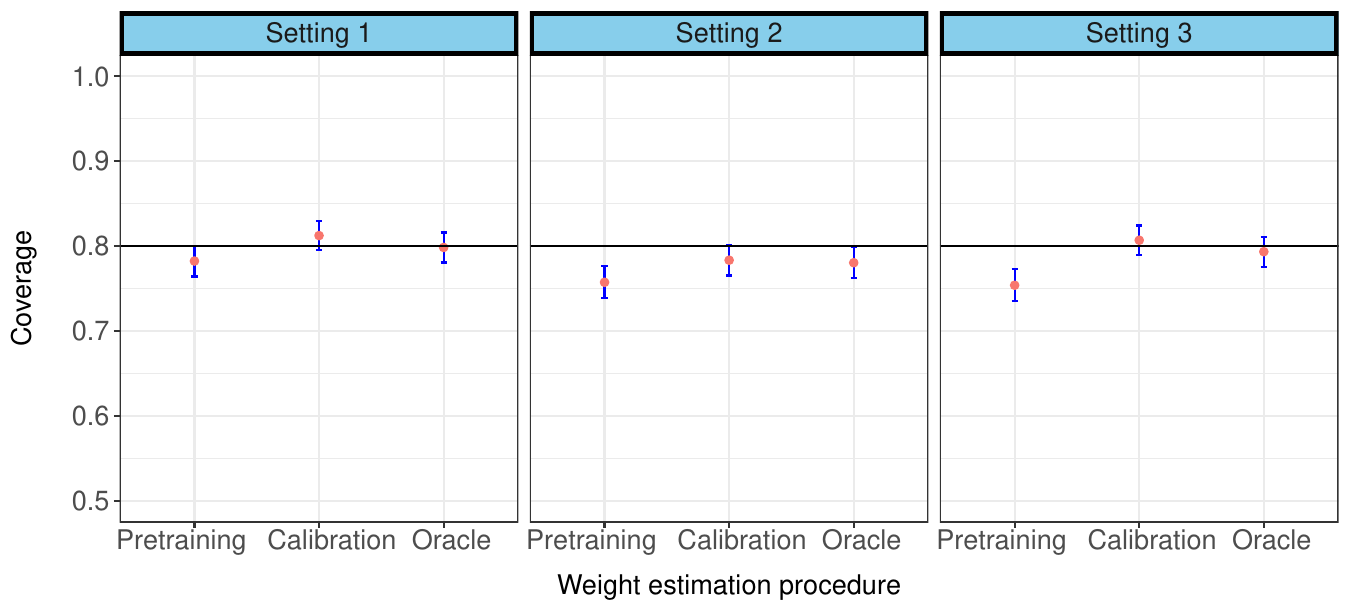}     
        \caption{Marginal coverage of the weighted conformal prediction interval~\eqref{eqn:WCP_w_hat_k}, implemented with weight function $\widehat{w}$ estimated using the pretraining data or using the calibration data, or, using the the ``oracle'' weights that rely on knowledge of the true distribution.
        The horizontal line marks the target coverage level $1-\alpha = 80\%$.
        Results are averaged over $2000$ independent trials, and the figure displays the mean coverage rate along with a 95\% confidence interval. See Appendix~\ref{app:wcp_which_data set} for details.}
        \label{fig:wcp_which_data set}
\end{figure}

Why does using the calibration data for estimating weights, lead to better coverage? Since we have chosen $n=n_*=100$ (i.e., $\widehat{w}$ is estimated using $100$ data points, for both options), this cannot be explained by a difference in the accuracy of estimating the true proportions $p_k$. Instead, we can understand this difference in performance as follows. When $\widehat{w}$ is estimated using the calibration data, the randomness in this estimate is \emph{self-correcting}: for instance, if for a particular group $k$ we have a slightly larger $n_k$ by random chance (i.e., $n_k > n p_k$), then
data points from this group are overrepresented when calculating the quantile $\widehat{q}$. But, the larger value of $n_k$ means that the weight $\widehat{w}_k$ is slightly smaller, and so these data points are also downweighted. In other words, there is a sort of cancellation effect, where the errors in the $\widehat{w}_k$'s are actually helpful for achieving the desired coverage level. In contrast, if we use the pretraining set to estimate the weights, the noise in the $n_k$'s is independent from the noise in the $\widehat{w}_k$'s, and so there is no such cancellation effect.

To summarize, in this work we have used a weight function $\widehat{w}$ trained on the calibration set
rather than the pretraining set, which is unusual relative to much of the literature---but this simulation demonstrates that this choice has better performance, as is supported by our stronger theoretical guarantees.

\subsection{A corrected GWCP for exact coverage}\label{app:GWCP_exact_coverage}
In this section, we provide a corrected version of the GWCP prediction interval that is guaranteed to provide coverage
at level $1-\alpha$. The goal is to construct a minimally more conservative version of GWCP to remove the error term in the coverage
guarantee.
The corrected interval is given by
\begin{equation}\label{eqn:GWCP_exact_coverage}
\widehat{C}_n(x) = \left\{y\in\mathcal{Y}: s(x,y) \leq \widehat{q}_k\right\}\textnormal{ for any $x\in\mathcal{X}$ with $x^0=k$},
\end{equation}
where $\widehat{q}_k = \textnormal{Quantile}_{1-\alpha_k}\left(\sum_{k=1}^K q_k \widehat{P}_{\textnormal{score}}^{(k)}\right)$
for
\[\alpha_k =\begin{cases} \alpha - \frac{q_k}{n_k}, & n_k >0, \\ \alpha, & n_k =0.\end{cases}\]
In other words, at a test point $X_{n+1}$ in group $k$, if $n_k>0$ then the threshold for the score $s(X_{n+1},y)$
is adjusted to be a bit more conservative; the adjustment is large only for groups with small observed sample size $n_k$.
(In the degenerate case that $\alpha_k<0$ for some $k$, we should interpret this by returning $\widehat{q}_k = +\infty$.)

\begin{theorem}\label{thm:GWCP_exact_coverage}
Suppose the training data points $(X_1,Y_1),\dots,(X_n,Y_n)$ are sampled from the fixed-group-size model, as in~\eqref{eqn:distrib_fixed_nk},
while the test point $(X_{n+1},Y_{n+1})$ is drawn independently from the distribution $Q$ as defined in~\eqref{eqn:define_Q_hier}.
Then, for any fixed (i.e., pretrained) score function $s:\mathcal{X}\times\mathcal{Y}\rightarrow\mathbb{R}$, the corrected GWCP prediction interval $\widehat{C}_n$ defined in~\eqref{eqn:GWCP_exact_coverage} satisfies
    \[\mathbb{P}\left\{Y_{n+1}\in\widehat{C}_n(X_{n+1})\right\} \geq 1 - \alpha.\]
\end{theorem}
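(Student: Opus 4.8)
The plan is to mirror the proof of Theorem~\ref{thm:GWCP_fixed_nk} step for step, but to replace the uniform total-variation slack with the tailored, group-specific correction $\alpha_k = \alpha - q_k/n_k$, which is designed so that the slack is absorbed \emph{exactly} rather than bounded by $\max_k\{q_k/n_k\}$. I would first dispose of the trivial regime: if $\sum_{k:n_k=0} q_k > \alpha$ then the mass that $\widehat{P}_{\textnormal{score}}$ places on finite values is strictly below $1-\alpha$, so every threshold $\widehat{q}_k$ equals $+\infty$ and coverage is $1$. From here on I assume $\sum_{k:n_k=0} q_k \le \alpha$, so that $\textnormal{Quantile}_{1-\alpha}(\widehat{P}_{\textnormal{score}})$ is finite.

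Next, exactly as before, I introduce an independent fresh draw $(X_{k,0},Y_{k,0})\sim\Pi_k$ in each group and write the coverage as $\sum_k q_k \mathbb{P}\{s(X_{k,0},Y_{k,0}) \le \widehat{q}_k\}$ with $\widehat{q}_k = \textnormal{Quantile}_{1-\alpha_k}(\widehat{P}_{\textnormal{score}}^{k,0})$ (recall $\widehat{P}_{\textnormal{score}}^{k,0}=\widehat{P}_{\textnormal{score}}$). Restricting to the observed groups and using exchangeability of $(X_{k,0},Y_{k,0}),\dots,(X_{k,n_k},Y_{k,n_k})$, I replace index $0$ by an average over $i=1,\dots,n_k$, obtaining $\sum_{k:n_k>0}\sum_{i=1}^{n_k} \frac{q_k}{n_k}\mathbb{P}\{s(X_{k,i},Y_{k,i}) \le \textnormal{Quantile}_{1-\alpha_k}(\widehat{P}_{\textnormal{score}}^{k,i})\}$ as a lower bound on the coverage, discarding the nonnegative contribution of the unobserved groups.

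The crucial step is the total-variation bound $\textnormal{d}_{\textnormal{TV}}(\widehat{P}_{\textnormal{score}},\widehat{P}_{\textnormal{score}}^{k,i}) \le q_k/n_k$, which yields $\textnormal{Quantile}_{1-\alpha_k}(\widehat{P}_{\textnormal{score}}^{k,i}) \ge \textnormal{Quantile}_{1-\alpha_k-q_k/n_k}(\widehat{P}_{\textnormal{score}})$. The entire point of the definition $\alpha_k = \alpha - q_k/n_k$ is that $\alpha_k + q_k/n_k = \alpha$, so the right-hand side is \emph{precisely} $\textnormal{Quantile}_{1-\alpha}(\widehat{P}_{\textnormal{score}})$, with no residual slack of the form $\max_k\{q_k/n_k\}$. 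Since $\sum_{k:n_k>0}\sum_{i=1}^{n_k}\frac{q_k}{n_k}\mathbf{1}\{s(X_{k,i},Y_{k,i}) \le t\}$ is exactly the mass $\widehat{P}_{\textnormal{score}}((-\infty,t])$ placed on finite values up to $t$, I would invoke \citet[Lemma A.1]{harrison2012conservative} at $t=\textnormal{Quantile}_{1-\alpha}(\widehat{P}_{\textnormal{score}})$ to conclude this quantity is $\ge 1-\alpha$ almost surely, and hence the coverage is $\ge 1-\alpha$.

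I expect the only delicate points — more bookkeeping than genuine obstacle — to be the edge cases. First, one must confirm that the finite part of $\widehat{P}_{\textnormal{score}}$ carries mass at least $1-\alpha$ exactly when $\sum_{k:n_k=0}q_k\le\alpha$, so that the $\delta_{+\infty}$ atoms from unobserved groups are correctly excluded from $(-\infty,t]$ and the CDF argument goes through. Second, in the degenerate case $\alpha_k<0$ the convention $\widehat{q}_k=+\infty$ must be checked to be consistent with the quantile inequality, which it is, since $\textnormal{Quantile}_{1-\alpha_k}$ with $1-\alpha_k>1$ returns $+\infty\ge\textnormal{Quantile}_{1-\alpha}(\widehat{P}_{\textnormal{score}})$. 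Both follow the template already established for Theorem~\ref{thm:GWCP_fixed_nk}, so the conceptual heart of the argument is simply the exact cancellation engineered by the choice of $\alpha_k$.
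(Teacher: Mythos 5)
Your proposal is correct and follows essentially the same route as the paper's own proof: the same fresh-draw and exchangeability argument borrowed from Theorem~\ref{thm:GWCP_fixed_nk}, the same total-variation bound $\textnormal{d}_{\textnormal{TV}}(\widehat{P}_{\textnormal{score}},\widehat{P}_{\textnormal{score}}^{k,i})\leq q_k/n_k$ exploited via the exact cancellation $\alpha_k + q_k/n_k = \alpha$, and the same final appeal to \citet[Lemma A.1]{harrison2012conservative}. If anything, your explicit up-front treatment of the regime $\sum_{k:n_k=0}q_k>\alpha$ (where every threshold is $+\infty$ and coverage is trivially $1$) is slightly more careful than the paper, which folds the unobserved groups into its main case analysis.
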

\begin{proof}[Proof of Theorem~\ref{thm:GWCP_exact_coverage}]
Following the same notation and terminology as the proof of Theorem~\ref{thm:GWCP_fixed_nk},
we rewrite the coverage probability as
    \[\mathbb{P}\left\{Y_{n+1}\in\widehat{C}_n(X_{n+1})\right\} = \sum_{k=1}^K q_k\mathbb{P}\left\{s(X_{k,0},Y_{k,0}) \leq \textnormal{Quantile}_{1-\alpha_k}(\widehat{P}_{\textnormal{score}}^{k,0})\right\},
   \]
   i.e., this is the same calculation as in~\eqref{eqn:key_step_thm:GWCP_exact_coverage} except with $\alpha_k$ in place of $\alpha$ for each group $k$.
The next step is again the same: by symmetry, we can rewrite this as
    \[\mathbb{P}\left\{Y_{n+1}\in\widehat{C}_n(X_{n+1})\right\} = \sum_{k=1}^K\sum_{i=1}^{n_k} \frac{q_k}{n_k}\mathbb{P}\left\{s(X_{k,i},Y_{k,i}) \leq \textnormal{Quantile}_{1-\alpha_k}(\widehat{P}_{\textnormal{score}}^{k,i})\right\}.\]
Now consider any group $k$. If $n_k=0$ then $\widehat{P}_{\textnormal{score}}^{k,i} = \widehat{P}_{\textnormal{score}}$ and $\alpha_k =\alpha$, and thus,
$\textnormal{Quantile}_{1-\alpha_k}(\widehat{P}_{\textnormal{score}}^{k,i}) = \textnormal{Quantile}_{1-\alpha}(\widehat{P}_{\textnormal{score}}) $. If instead $n_k>0$, then
by construction, for any $i$ we have $\textnormal{d}_{\textnormal{TV}}(\widehat{P}_{\textnormal{score}},\widehat{P}_{\textnormal{score}}^{k,i})\leq  q_k/n_k = \alpha - \alpha_k$, and therefore,
$ \textnormal{Quantile}_{1-\alpha_k}(\widehat{P}_{\textnormal{score}}^{k,i}) \geq  \textnormal{Quantile}_{1-\alpha}(\widehat{P}_{\textnormal{score}}) $.
(This holds also for the case $\alpha_k<0$ since then the quantile on the left-hand side is defined as $+\infty$.)
Combining both cases, then,
    \[\mathbb{P}\left\{Y_{n+1}\in\widehat{C}_n(X_{n+1})\right\} \geq \sum_{k=1}^K\sum_{i=1}^{n_k} \frac{q_k}{n_k}\mathbb{P}\left\{s(X_{k,i},Y_{k,i}) \leq \textnormal{Quantile}_{1-\alpha}(\widehat{P}_{\textnormal{score}})\right\}\geq 1-\alpha,\]
where (as in the proof of Theorem~\ref{thm:GWCP_fixed_nk}) the last step holds by definition of the weighted empirical distribution $\widehat{P}_{\textnormal{score}}$ (see, e.g., \citet[Lemma A.1]{harrison2012conservative}).    
\end{proof}

To complete this section, we repeat the simulation in Section~\ref{sec:sim_fixed_nk}, with the corrected GWCP~\eqref{eqn:GWCP_exact_coverage} in place of the original version~\eqref{eqn:GWCP}. The simulation settings are exactly the same as in Section~\ref{sec:sim_fixed_nk}. 
The results are displayed in Figure~\ref{fig:sim_fixed_nk_corrected}. In this figure, coverage is always at least $1-\alpha$,
as guaranteed by Theorem~\ref{thm:GWCP_exact_coverage}.
In particular, comparing to Figure~\ref{fig:sim_fixed_nk}, we can see that the loss of coverage
incurred when $\min_k n_k=1$ is no longer present---but instead, the method is somewhat too conservative for the \texttt{AllSmall} setting, for smaller values of $K$.

\begin{figure}[t]
    \centering
    \includegraphics[width = 0.9\textwidth]{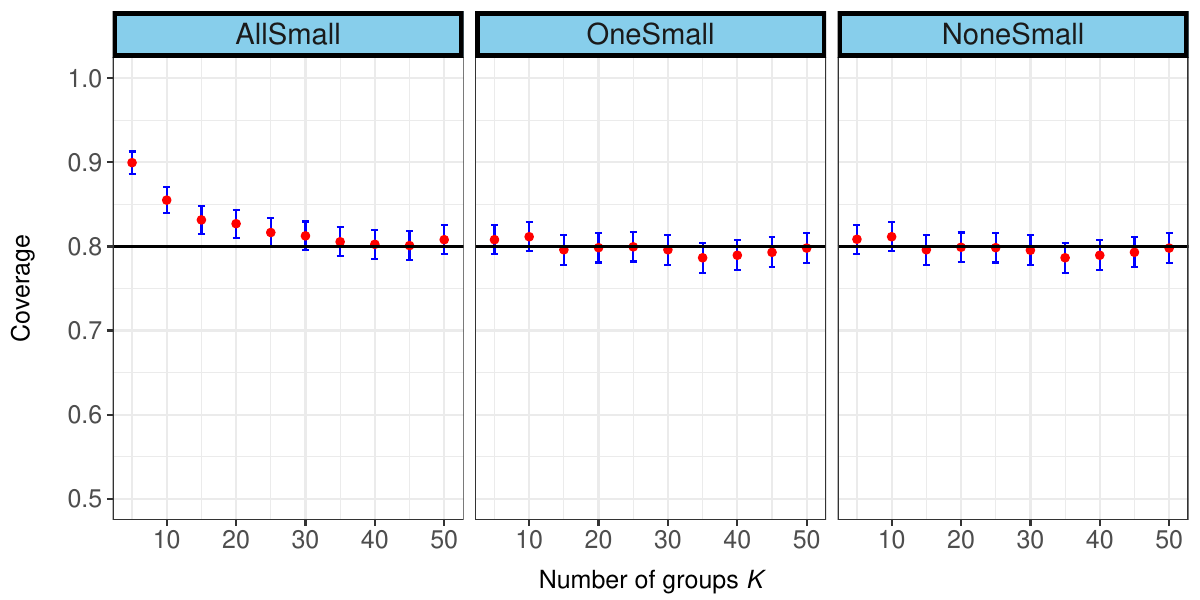}
    \caption{Simulation results for the setting of fixed group sizes, in three regimes for the group sizes $n_1,\dots,n_K$, with the corrected GWCP prediction interval~\eqref{eqn:GWCP_exact_coverage} in place of the original version~\eqref{eqn:GWCP}.
    The horizontal line marks the target coverage level $1-\alpha = 80\%$. Results are averaged over $2000$ independent trials, and the figure displays the mean coverage rate along with a 95\% confidence interval.  See Appendix~\ref{app:GWCP_exact_coverage} for details. {\color{red} }}
    \label{fig:sim_fixed_nk_corrected}
\end{figure}

\end{document}